\documentclass[letterpaper, 10 pt, conference]{ieeeconf} 
\IEEEoverridecommandlockouts

\usepackage{enumitem}
\setlist{topsep=0pt, leftmargin=*}
\usepackage{algorithm}
\usepackage{algorithmicx}

\usepackage[noend]{algpseudocode}
\usepackage{amsmath, amsfonts, bm, amssymb, tabularx}
\usepackage{latexsym, mathtools}
\usepackage{amsthm}
\usepackage{ifthen}
\usepackage{hyperref}\hypersetup{colorlinks=true, unicode=true, linkcolor=[rgb]{0.10,0.05,0.67}, citecolor=[rgb]{0.10,0.05,0.67}, filecolor=[rgb]{0.10,0.05,0.67}, urlcolor=[rgb]{0.10,0.05,0.67}}
\usepackage{Shorthands}
\usepackage{cleveref}
\newcommand{\citep}[1]{\cite{#1}}
\usepackage{stackengine}

\newcommand{\fnorm}[1]{\ensuremath{\left\| #1 \right\|_\mathrm{F}}}
\newcommand{\sfnorm}[1]{\ensuremath{\| #1 \|_\mathrm{F}}}
\newcommand{\bfnorm}[1]{\ensuremath{\big\| #1 \big\|_\mathrm{F}}}
\newcommand{\bnorm}[1]{\ensuremath{\big\| #1 \big\|}}

\usepackage[numbers, sort&compress]{natbib}
\allowdisplaybreaks
\usepackage{booktabs}

\title{\LARGE \bf
Optimistic Online LQR via Intrinsic Rewards
}
\usepackage{subcaption}

\author{
Marcell Bartos$^\star$, Bruce D. Lee$^\star$, Lenart Treven, Andreas Krause, Florian Dörfler, Melanie N. Zeilinger
\thanks{$^\star$ Equal contribution (shared first authorship). All authors are with ETH Zürich, Zürich, Switzerland. Emails : \tt\small mbartos@ethz.ch, bruce.lee@ai.ethz.ch\normalfont\small. This work was supported as a part of NCCR Automation, a National Centre of Competence in Research, funded by the Swiss National Science Foundation (grant number 51NF40\textunderscore225155). This work is also partially supported by an ETH AI Center Fellowship to Bruce Lee.}%
}

\begin{document}

\twocolumn

\maketitle
\thispagestyle{empty}
\pagestyle{empty}

\begin{abstract}
Optimism in the face of uncertainty is a popular approach to balance exploration and exploitation in reinforcement learning. Here, we consider the online linear quadratic regulator (LQR) problem, i.e., to learn the LQR corresponding to an unknown linear dynamical system by adapting the control policy online based on closed-loop data collected during operation. In this work, we propose Intrinsic Rewards LQR (IR-LQR), an optimistic online LQR algorithm that applies the idea of intrinsic rewards originating from reinforcement learning and the concept of variance regularization to promote uncertainty-driven exploration. IR-LQR typically retains the structure of a standard LQR synthesis problem by only modifying the cost function, resulting in an intuitively pleasing, simple, computationally cheap, and efficient algorithm. This is in contrast to existing optimistic online LQR formulations that rely on more complicated iterative search algorithms or solve computationally demanding optimization problems. We show that IR-LQR achieves the optimal worst-case regret rate of $\sqrt{T}$, and compare it to various state-of-the-art online LQR algorithms via numerical experiments carried out on an aircraft pitch angle control and an unmanned aerial vehicle example.
\end{abstract}
\section{Introduction} 

Reinforcement learning (RL) has proven effective in a variety of applications from game playing \citep{mnih2013playing,mnih2015human,hafner2025training} to robotics \citep{levine2016end,intelligence2025pi}. Despite its success, it remains challenging to apply reliably in the real world due to sample inefficiency \citep{sutton1998reinforcement,duan2016benchmarking,henderson2018deep}. One of the critical bottlenecks in designing sample efficient RL algorithms is to efficiently navigate the exploration-exploitation tradeoff. 

Learning the linear quadratic regulator (LQR) through online interactions with an unknown system has emerged as a standard benchmark to study the efficacy of particular exploration schemes in continuous state and action spaces. Prior investigations have focused on strategies such as optimism in the face of uncertainty, and naive exploration \citep{abbasi2011regret, simchowitz2020naive}. Such approaches either sacrifice the computationally efficiency of LQR synthesis, or explore in an undirected manner. We study the use of uncertainty-based exploration bonuses for which the synthesis problem typically remains an LQR problem and show that this strategy achieves the optimal $\sqrt{T}$ regret rate.

Namely, the synthesis problem lying at the heart of the proposed algorithm can be informally written as
\begin{align*}
    K_t = \argmin_K J^\mathrm{LQR}(K,\hat{\Theta}_t,\tilde{Q}_t)
\end{align*}
where the informal notation $J^\mathrm{LQR}(K,\hat{\Theta}_t,\tilde{Q}_t)$ represents the LQR cost associated with controller $K$ under the current model estimate $\hat{\Theta}_t$ with a modified weight matrix $\tilde{Q}_t$.

\subsection{Related Work}

\paragraph{Exploration Strategies}
\looseness-1
Numerous effective strategies exist for navigating the exploration-exploitation tradeoff. One of the most celebrated is the principle of \emph{optimism in the face of uncertainty} (OFU), in which at every decision, the learner proposes a policy that maximizes the highest plausible return under uncertainty \citep{curi2020efficient, treven2023efficient}. While such approaches are statistically efficient, they tend to be computationally challenging, motivating strategies such as Thompson sampling \citep{chowdhury2017kernelized,russo2018tutorial} and $\varepsilon$-greedy approaches \citep{kearns2002near,mnih2015human,van2016deep}. Alternatively, the reward can be appended with exploration bonuses that encourage visiting novel regions in the state and action space \citep{strehl2008analysis, bellemare2016unifying, sukhija2025optimism}. These alternatives to optimism-based approaches still present computational challenges when extended to continuous state and action spaces, and analyses typically rely on oracle access to control synthesis algorithms over nonlinear dynamics and complex rewards. In this paper, we restrict our attention to linear dynamical systems and quadratic stage costs, and show that exploration bonuses yield simple and computationally efficient synthesis problems. 

\paragraph{Learning the LQR} 
\looseness-1
Learning the linear quadratic regulator has emerged as a standard benchmark for studying the efficiency of reinforcement learning applied to continuous state and action spaces. It has been shown that a learner interacting with an unknown linear system for $T$ time steps incurs at least $\Omega(\sqrt{T})$ regret \citep{simchowitz2020naive, ziemann2024regret}. \citet{abbasi2011regret} present an algorithm that achieves $\tilde{\calO}(\sqrt{T})$ regret using a computationally intractable optimism procedure. \citet{cohen2019learning} and \citet{abeille2020efficient} derive tractable optimism-based approaches based on semidefinite program reformulations and Lagrangian relaxations that maintain $\tilde{\calO}(\sqrt{T})$ regret. Simpler approaches based on Thompson sampling \citep{abeille2017thompson, kargin2022thompson} and naive exploration \citep{mania2019certainty, simchowitz2020naive}, which synthesize policies based on the principle of certainty equivalence and inject appropriately scheduled probing noise, can similarly achieve sublinear regret. While such simple schemes are effective, they leverage undirected exploration. Naive exploration is combined with the OFU principle in~\cite{lale2022reinforcement}, resulting in a statistically efficient but computationally expensive algorithm.  We analyze an approach based on exploration bonuses for learning the LQR that maintains the simplicity and efficiency of unmodified LQR  synthesis procedure, while promoting uncertainty-driven exploration.

\paragraph{Dual and Adaptive Control} 
The problem of dual control, in which an agent must interact with an uncertain system to simultaneously reduce uncertainty about the system and exploit the information that it has about the system, is introduced by \citet{feldbaum1960dual1, feldbaum1960dual2}. Determining the optimal dual controller is intractable in general. However, approximate schemes that estimate the system parameters via least squares and synthesize a controller via certainty equivalence have been shown to converge asymptotically to the optimal controller for the self-tuning regulator problem introduced by \cite{astrom1973self}. Subsequent work proposes a certainty equivalence-based algorithm that achieves asymptotically optimal regret for the self-tuning regulator problem \citep{lai1987asymptotically}. The self-tuning regulator problem studies minimum variance control of a plant modeled by ARX dynamics. The LQR variant of the self-tuning regulator problem possesses a more challenging exploration-exploitation tradeoff than minimum variance control, as playing a near-optimal controller does not provide sufficient information to identify the dynamics parameters for synthesis of the optimal controller~\citep{polderman1986necessity}.
Recent works in the field of adaptive control addressing the LQR problem with unknown dynamics include~\cite{iannelli2020multiobjective, zhao2025policy, bartos2025stability}.
In this work, we consider a dual exploration bonus that can also be understood as a type of variance regularization~\cite{pillonetto2022regularized, chiuso2023harnessing, zhao2025regularization}.

\subsection{Contribution}

We present an algorithm for learning the LQR through online interaction. The algorithm uses uncertainty-informed exploration bonuses to modify the control synthesis objective, thereby directing exploration to regions of the state and action space where the dynamics are most uncertain. The exploration bonus is chosen such that the synthesis problem remains an LQR problem, making it computationally efficient. By demonstrating that this synthesis objective is a lower bound on the cost achieved by the optimal controller (and therefore conforms to the OFU principle), we derive a high probability regret bound that scales with the mini-max optimal $\sqrt{T}$ rate, where $T$ is the number of interactions with the environment (i.e. time steps). The algorithm is evaluated on an airplane pitch control and an unmanned aerial vehicle example, and compared with alternatives such as naive exploration and other optimistic synthesis procedures. The experiments demonstrate the improved sample efficiency and computational efficiency of the approach.

\textbf{Outline: } The problem we aim to address is detailed in Section~\ref{sec:problem_formulation}. Section~\ref{sec:method} then presents the proposed method, and we analyze its performance in the form of a high-probability regret bound in Section~\ref{sec:analysis}. The results of the numerical experiments are presented in Section~\ref{sec:numerical}, which is followed by the conclusion in Section~\ref{sec:conclusion}. The proofs of the theoretical results can be found in the Appendix.

\textbf{Notation: } For vector $x$, we denote its Euclidean norm by $\snorm{x}$. We use $\succ$ ($\succeq$) to indicate positive (semi)definiteness. For matrix $A$, we use $\norm{A}_2$ and $\fnorm{A}$ to denote its spectral and Frobenius norm respectively, and $\norm{A}_P$ represents the weighted Frobenius norm $\fnorm{AP^{1/2}}$ for $P \succ 0$.
Consider the generalized discrete algebraic Riccati equation: $P = A^\top P A - (A^\top P B + N)(B^\top P B + R)^{-1} (B^\top P A + N^\top) + Q$. The unique positive definite solutions of the generalized discrete algebraic Riccati equation and the discrete Lyapunov equation are denoted by $\dare(A,B,Q,R, N)$ and $\dlyap(A,Q)$, respectively. The block diagonal matrix with blocks $Q$ and $R$ in its diagonal is denoted by $\mathrm{diag}(Q,R)$.  We use $\tilde{\calO}$ to hide logarithmic factors.
\section{Problem Formulation} \label{sec:problem_formulation}

We consider an unknown linear time invariant dynamical system which evolves according to
\begin{equation}
\label{eq: dynamics}
\begin{aligned}
    x_{t+1} &= A_\star x_t + B_{\star} u_t + w_t, t \geq 0
\end{aligned}
\end{equation}
with state $x_t \in \R^{\dx}$, input $u_t \in \R^{\du}$, process noise $w_t \overset{\iid}{\sim} \calN(0,\sigma_\mathsf{w}^2I)$\footnote{To simplify notation we restrict our attention to this simplified setting, but extensions to general $\Sigma_\mathsf{w} \succ 0$ noise covariances or even sub-Gaussian distributions are straightforward.} for $\sigma_\mathsf{w} > 0$, and initial condition $x_0 = 0$. We assume that we can directly measure the state $x_t$ and that the pair of unknown matrices $(A_{\star}, B_{\star})$ is stabilizable. We will use the shorthand notation $\Theta_\star := \bmat{A_\star & B_\star}$.

We consider the problem of learning to control \eqref{eq: dynamics} through online interactions. Let the control objective be defined by cost matrices $Q \succ 0$ and $R \succ 0$. The learner interacts with the system for $T$ time steps, and incurs a cumulative cost
\begin{align*}
    C_T:=\sum_{t=0}^{T-1}\ell(x_t, u_t), \quad \ell(x_t, u_t) := x_t^\top Q x_t + u_t^\top R u_t. 
\end{align*}
The performance of the learner is compared to the cost that could be achieved on average if the agent knew the system in advance. To this end, we define the ergodic LQR cost as
\begin{equation}
\begin{aligned}
    \label{eq: lqr objective}
    J(K, \Theta) := \limsup_{T
    \to\infty} \frac{1}{T} \E_{\Theta}^K \brac{\sum_{t=0}^{T-1} \ell(x_t, u_t)},
\end{aligned}
\end{equation}
where expectation is taken with respect to the process noise, the subscript denotes that the state evolves according to~\eqref{eq: dynamics} with $\Theta = \bmat{A & B}$ taking the place of $\Theta_\star$, and the superscript denotes that the actions are selected according to the controller $K$ as $u_t = K x_t$. We let $K_{\star} = \argmin_{K \in \R^{\du \times \dx}} J(K, \Theta_ \star)$ be the optimal controller.

The \emph{regret} of the learner is defined in the literature~\cite{abbasi2011regret} as 
\begin{align*}
    \calR_T := C_T - T J(K_{\star}, \Theta_\star). 
\end{align*}
A learner that approaches optimal behavior through the course of its interactions incurs sublinear regret. 

To show sublinear regret, we focus our attention on learning optimal behavior starting from a good initial guess. To this end, we introduce an initial set of parameters centered around the nominal estimate $\hat{\Theta}_0 := \bmat{\hat A_0 & \hat B_0}$ of radius $1/\lambda$:
\begin{align} \label{eq:calC0_def}
    \calC_0 := \curly{\Theta \in \mathbb{R}^{\dx \times (\dx + \du)}: \bfnorm{\Theta - \hat{\Theta}_0}\leq \frac{1}{\lambda}}.
\end{align}
We assume that this initial set contains the true parameter.
\begin{assumption} \label{asm:C0}
    We suppose that $\Theta_\star \in \calC_0$. We additionally suppose that all $\Theta \in \calC_0$ are stabilizable. 
\end{assumption}
Note that this assumption is not inherently restrictive in the sense that $\lambda$ could be arbitrarily small. However, algorithmic certificates will require a large enough $\lambda$. Without prior knowledge of this set, the learner could use warm-up schemes that achieve a coarse model of the system and an initial stabilizing controller,  as in~\cite{faradonbeh2018finite}.

Consider a more general version of~\eqref{eq: lqr objective}, where the positive definite cost to be minimized also contains cross terms: $\ell(x,u)=x^\top Q x + u^\top R u + 2x^\top Nu$. Then, for a known system, the minimizer is the LQR:
\begin{align}
    K(A,B, Q, R, N) &:= -(B^\top P B +R)^{-1} (B^\top P A + N^\top) \nonumber \\
    P &= \dare(A,B,Q,R, N).  \label{eq: gen lqr}
\end{align}
Our method will rely on the solution to such generalized LQR problems. The closed-form expression for the LQR controller via the Riccati equation makes the synthesis problem for a known dynamical system computationally efficient, and scalable to high-dimensional systems. In the sequel, we show that the learner may achieve sublinear regret using the principle of optimism in the face of uncertainty by solving LQR synthesis problems with a modified cost.

\section{Proposed Method} \label{sec:method}

\begin{figure*}[h!]
    \centering
    \includegraphics[width=0.99\textwidth]{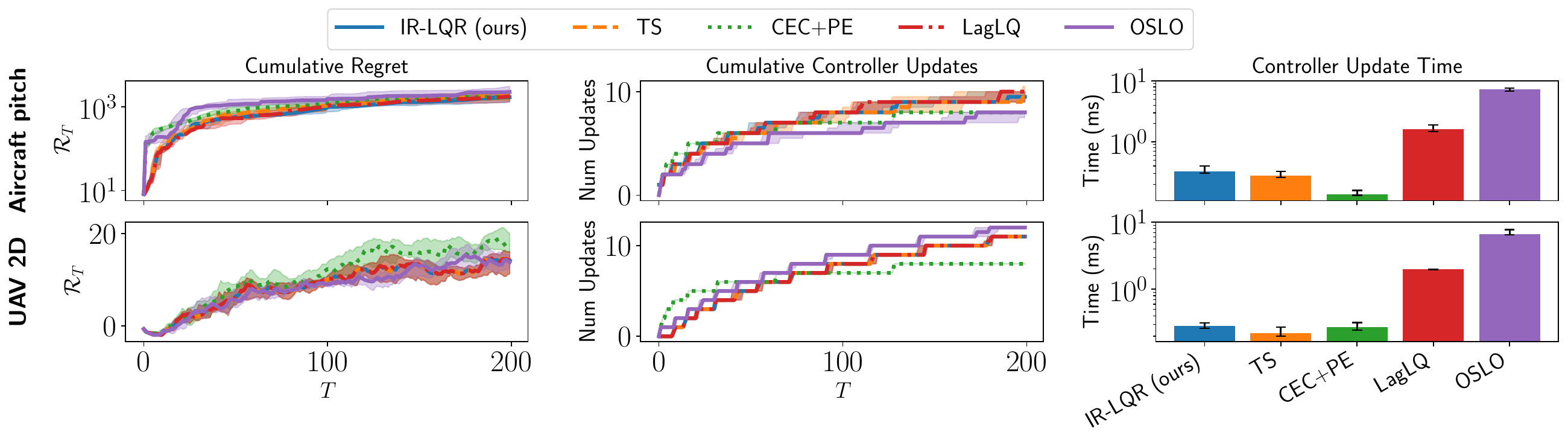}
    \caption{Comparison of online LQR methods on the aircraft pitch angle control (top) and unmanned aerial vehicle (bottom) examples, in terms of cumulative regret (left), number of controller updates (middle), and wall-clock time per controller update (right). The median is shown together with bootstrap confidence intervals (60\%) across runs. (40 samples, $T=200$)} \label{fig:plots}
    \vspace{-0.5cm}
\end{figure*}

We consider a learner that interacts with the system under a static state-feedback controller and periodically updates the controller gain using data collected from past interactions. To update the gain, the learner first estimates the unknown system parameters and then solves a control synthesis problem with a subtracted exploration bonus.

\subsection{Parameter Estimation}
To estimate the parameters of the system at an arbitrary time $t$, the learner uses the data $\calD_t = \curly{(x_k, u_k, x_{k+1})}_{k=0}^{t-1}$ to fit an estimate for $\Theta_\star$ using the projected regularized least squares estimator
\begin{align} \label{eq:pRLS}
    \hat{\Theta}_t = \Pi_{\calC_0} \left[ \left(\sum_{k=1}^t x_k\bmat{x_{k-1} \\ u_{k-1}}^\top + \lambda \hat{\Theta}_0\right) V^{-1}_t \right],
\end{align}
where $\Pi_{\calC_0}(\Theta):= \argmin_{\Theta' \in \calC_0} \norm{\Theta-\Theta'}_{V_t}$,
$V_t$ is the empirical covariance matrix
\begin{align} \label{eq:V_def}
    V_t :=\lambda I + \sum_{k=0}^{t-1} \bmat{x_k \\ u_k} \bmat{x_k \\ u_k}^\top,
\end{align}
and the regularizer $\lambda$ is the same variable that appears in the definition of $\calC_0$ in~\eqref{eq:calC0_def}. This estimator is commonly used in online LQR (cf.~\cite{abbasi2011regret, cohen2019learning, abeille2020efficient}) as it achieves the best fit to the data $\calD_t$ in a (regularized) least-squares sense. 
The regularization term is equivalent to placing an independent Gaussian prior on the entries of $\Theta$ centered at $\hat{\Theta}_0$, with variance $\sigma_\mathsf{w}^2/\lambda$ for each entry.
Assumption~\ref{asm:C0} is taken into account by projecting the estimate onto $\calC_0$.

The gap between this estimate and the ground truth is known to be bounded in terms of the empirical covariance matrix as follows.
\begin{lemma}(Adapted from~\cite[Prop.~1]{abeille2020efficient}.) \label{lem:id_success_event}
    For any $\delta_\mathrm{ID} \in (0,1)$, it holds with probability at least $1-\delta_\mathrm{ID}$ that
    \begin{align} \label{eq:id_success_event}
        \bnorm{\hat{\Theta}_t - \Theta_\star}_{V_t} \leq \beta_t(\delta_\mathrm{ID}),
    \end{align}
    $\forall t \geq 0$, where
    \begin{align} \label{eq:beta_def}
        \beta_t(\delta_\mathrm{ID}) :=  \sigma_\mathsf{w}\sqrt{2\dx\mathrm{log}\left( \frac{\dx\mathrm{det}(V_t)^{1/2}}{\delta_\mathrm{ID} \mathrm{det}(\lambda I)^{1/2}}\right)}+\frac{1}{\sqrt{\lambda}}.
    \end{align}
    Additionally, it also holds $\forall t \geq 0$ that
    \begin{align} \label{eq:lam_bound_on_error}
        \bfnorm{\hat{\Theta}_t - \Theta_\star} \leq \frac{2}{\lambda}.
    \end{align}
\end{lemma}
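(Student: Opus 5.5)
The plan is to split the statement into its two parts. The first, \eqref{eq:id_success_event}, comes from the standard self-normalized martingale bound for the unprojected regularized least squares estimate, combined with a contraction property of the $V_t$-weighted projection. The second, \eqref{eq:lam_bound_on_error}, is purely geometric.

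For the first part, let $\tilde{\Theta}_t$ denote the bracketed quantity in \eqref{eq:pRLS}, i.e.\ the estimate before projection, and write $z_k := \bmat{x_k^\top & u_k^\top}^\top$. Substituting $x_{k+1} = \Theta_\star z_k + w_k$ gives
\begin{align*}
\tilde{\Theta}_t - \Theta_\star = \Big(\textstyle\sum_{k=0}^{t-1} w_k z_k^\top + \lambda(\hat{\Theta}_0 - \Theta_\star)\Big) V_t^{-1}.
\end{align*}
Taking the $V_t$-weighted Frobenius norm and applying the triangle inequality yields two terms.
\begin{itemize}
\item The first term is $\sfnorm{S_t V_t^{-1/2}}$ with $S_t = \sum_k w_k z_k^\top$. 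Each of its $\dx$ rows is a scalar self-normalized martingale, since $z_k$ is $\mathcal{F}_{k}$-measurable and $w_k$ is conditionally $\sigma_\mathsf{w}$-sub-Gaussian.
\item The second term is $\lambda\sfnorm{(\hat\Theta_0-\Theta_\star)V_t^{-1/2}} \le \sqrt{\lambda}\,\sfnorm{\hat\Theta_0-\Theta_\star}$, using $V_t \succeq \lambda I$. By Assumption~\ref{asm:C0} this is at most $1/\sqrt{\lambda}$.
\end{itemize}
For each row, the Abbasi-Yadkori et al.\ self-normalized inequality holds uniformly in $t$ with probability $1-\delta_\mathrm{ID}/\dx$. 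Its bound is $2\sigma_\mathsf{w}^2\log\big(\dx\det(V_t)^{1/2}/(\delta_\mathrm{ID}\det(\lambda I)^{1/2})\big)$. A union bound over the $\dx$ rows, followed by summing the squares, gives the first term of $\beta_t$, which is exactly the $2\dx\log(\cdot)$ expression under the square root.

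It remains to pass from $\tilde{\Theta}_t$ to $\hat{\Theta}_t$. The map $\Theta \mapsto \Theta V_t^{1/2}$ is a linear isometry from the $\norm{\cdot}_{V_t}$ norm to the Frobenius norm, and it sends $\calC_0$ to a convex set. So $\Pi_{\calC_0}$ is a Euclidean projection onto a convex set in the transformed coordinates. Because $\Theta_\star \in \calC_0$, the nonexpansiveness of such projections gives $\norm{\hat\Theta_t - \Theta_\star}_{V_t} \le \norm{\tilde\Theta_t - \Theta_\star}_{V_t}$, and \eqref{eq:id_success_event} follows.

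The second part, \eqref{eq:lam_bound_on_error}, holds deterministically. Both $\hat\Theta_t$ and $\Theta_\star$ lie in $\calC_0$, which is a Frobenius ball of radius $1/\lambda$, so the triangle inequality through $\hat\Theta_0$ gives the bound $2/\lambda$.

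The main obstacle is bookkeeping in the martingale step. The filtration must be chosen so that $u_k$ is measurable before $w_k$ is revealed. The constant $\dx$ from the union bound must land inside the logarithm exactly as written in \eqref{eq:beta_def}. Since the statement is adapted from \cite[Prop.~1]{abeille2020efficient}, the intended route is to cite that result for the unprojected estimator and add the projection argument above.
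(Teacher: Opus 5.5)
Your proposal is correct and follows essentially the route the paper intends. The paper gives no proof of its own beyond citing \cite[Prop.~1]{abeille2020efficient}, which rests on the same row-wise self-normalized martingale bound with a union bound over the $d_x$ rows. Your write-up makes the adaptation explicit, and each piece checks out: the recentered regularizer gives the $1/\sqrt{\lambda}$ term via $V_t \succeq \lambda I$ and $\Theta_\star \in \mathcal{C}_0$; the $V_t$-weighted projection onto the convex set $\mathcal{C}_0$ is nonexpansive; and the deterministic triangle inequality yields \eqref{eq:lam_bound_on_error}.
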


\subsection{Controller Synthesis}
\looseness-1
The learner uses the estimate $\hat{\Theta}_t$  to synthesize a controller to apply in subsequent time steps. It is well known that if the learner synthesizes and deploys a certainty equivalent policy greedily as $u_t = K(\hat A_t, \hat B_t, Q, R, 0) x_t$, it can get stuck in local minima in which the estimates of $\hat A_t$ and $\hat B_t$ do not improve \citep{simchowitz2020naive}. The learner then fails to achieve sublinear regret. This motivates approaches such as dithering that inject random noise at the actuation channel, as analyzed in \citep{simchowitz2020naive} in the online LQR setting; and more broadly it touches upon the concept of persistency of excitation in adaptive control~\cite{narendra2012stable}. Such exploration is undirected and continuously excites parameter directions also where the optimal controller already provides exploration. Optimistic approaches combat this inefficiency by using the uncertainty in the parameter estimates informed by \Cref{lem:id_success_event} to guide exploration more naturally. Unfortunately, they are typically computationally challenging. We therefore propose to use the estimate $\hat{\Theta}_t$ to synthesize an LQR controller with an exploration bonus. 

To this end, we define a modified control objective 
\begin{align}
    \tilde{J}&(K, \Theta, \calV) := \label{eq:J_tilde_def}\\
    &\limsup_{T
    \to\infty} \frac{1}{T} \E_{\Theta}^K \brac{\sum_{t=0}^{T-1} \bmat{x_t \\ u_t}^\top \left(\bmat{Q &  \\ & R} - \calV \right)\bmat{x_t \\ u_t}}, \nonumber
\end{align}
which can be understood as a generalized LQR problem where the cost has been modified by the matrix $\calV$. We will refer to the matrix $\calV$ as the exploration bonus. We consider an exploration bonus of the form 
\begin{align}
    \label{eq: variance bonus}
    \calV_t = g_t V_{t}^{-1},
\end{align} where
\begin{align}
    L_J &:= \frac{C^2(\fnorm{Q}+\bar{K}^2\fnorm{R})}{1-\rho^2},\label{eq:L_J} \\
    g_t &:= L_J(\beta_t^2 + 2\bar{\Theta}\beta_t \snorm{V_t}^{1/2}), \label{eq:gt} \\
    \bar{\Theta} &:= \max_{\Theta \in \calC_0}\sfnorm{\Theta},\quad \bar{K} := \max_{K \in \calK} \sfnorm{K}.
\end{align}
Such a bonus corresponds to minimizing a variance-regularized LQR cost function, i.e., the original cost is modified such that the state and input penalties are reduced along directions of the state and input space where past data provides less information (i.e., where the uncertainty encoded by $V_t^{-1}$ is large).
In other words, this cost modification promotes exploration, commonly referred to as an \emph{intrinsic reward} in reinforcement learning~\cite{bellemare2016unifying, sukhija2025optimism}. We will show that this bonus is \emph{optimistic}, i.e. that for any controller $K$, $\tilde{J}(K, \hat{\Theta}_t, \calV_t) \leq J(K, \Theta_\star)$ with high probability. 

The bonus $\calV_t$ may grow fast enough that the problem $\min_{K} J(K, \hat \Theta_t, \calV_t)$ becomes unbounded below. To account for this issue, we can define an alternative exploration bonus based on the prior Assumption~\ref{asm:C0}, which also leads to an optimistic objective. In particular, let 
\begin{align}
    c(\lambda) &:= 4L_J(\bar{\Theta}/\lambda + 1/\lambda^2). \label{eq:c(lam)}
\end{align}
We show that the bonus $\calV \gets c(\lambda) I$ is also optimsitic, i.e. it holds that $J(K, \hat \Theta_t, c(\lambda) I) \leq J(K, \Theta_\star)$. We therefore consider the synthesis problem defined by the maximum of these two objectives: 
\begin{align}
   \label{eq: optimistic synthesis}
   K_t = \argmin_K \max\{\tilde{J}(K,\hat{\Theta}_t, \calV_t),\tilde{J}(K, \hat{\Theta}_t, c(\lambda)I)\}.
\end{align}
This maximum objective inherits the optimism of the individual objectives. Furthermore, as long as 
\begin{align}
    \label{eq:lambdamin}
    c(\lambda) < \lambda_{\min}(\mathrm{diag}(Q,R)),
\end{align}
this objective remains bounded below. Whenever $\norm{\calV_t}\leq c(\lambda)$, the solution to the above problem is equivalent to
\begin{align} \label{eq:K_primary}
    K_t = \argmin_K \tilde{J}(K,\hat{\Theta}_t, \calV_t),
\end{align}
hence reducing to a standard LQR problem. When this condition is violated, the problem \eqref{eq: optimistic synthesis} can be solved via an efficient fallback procedure, for instance via an exact semidefinite program (SDP) reformulation detailed in Appendix~\ref{sec:app:SDP}. Our numerical experiments show that this fallback procedure is never needed in our examples.

We now proceed to show that the objective in \eqref{eq: optimistic synthesis} is optimistic. First, we state that a good enough initial estimate (i.e. small enough $\calC_0$) results in a set of controllers $\calK$ such that every element of $\calK$ stabilizes every element of $\calC_0$. Define the following certainty-equivalent controller:
\begin{align} \label{eq:Kstab_def}
        K_\mathrm{stab} = K(\hat{A}_0,\hat{B}_0,Q,R,0).
    \end{align}

\begin{assumption} \label{asm:Kstab}
    $K_\mathrm{stab}$ stabilizes $\Theta_\star$.
\end{assumption}

\begin{lemma} \label{lem:every_theta_stable}
    There exist $r_K>0$, $\lambda_\star<\infty$, $C\ge 1$, and $\rho\in(0,1)$ such that, for
    \begin{align} \label{eq:calK_def}
        \calK := \curly{K\in\mathbb R^{d_u\times d_x} : \fnorm{K-K_\mathrm{stab}} \leq r_K},    
    \end{align}
    it holds that $\forall \lambda \geq \lambda_\star$, $\forall \Theta=[A\ B]\in\mathcal C_0$, $\forall K\in\mathcal K$, and $\forall k\ge 0$,
    \begin{align*}
        \fnorm{(A + BK)^k} \leq C\rho^k.
    \end{align*}
\end{lemma}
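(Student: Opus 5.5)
Since $K_\mathrm{stab}$ stabilizes $\Theta_\star$ by Assumption~\ref{asm:Kstab}, the plan is a continuity/perturbation argument around the pair $(\Theta_\star, K_\mathrm{stab})$, made uniform through a Lyapunov certificate. Let $A_{\mathrm{cl},\star} := A_\star + B_\star K_\mathrm{stab}$. It has spectral radius strictly less than one, so the matrix $P := \dlyap(A_{\mathrm{cl},\star}^\top, I)$, i.e.\ the solution of $P = A_{\mathrm{cl},\star}^\top P A_{\mathrm{cl},\star} + I$, exists and satisfies $I \preceq P \preceq \snorm{P}I$. From this equation,
\begin{align*}
A_{\mathrm{cl},\star}^\top P A_{\mathrm{cl},\star} \preceq P - I \preceq \bigl(1 - 1/\snorm{P}\bigr) P .
\end{align*}
This is a quantitative contraction in the $P$-weighted norm. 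I will keep strictly less than this contraction for all nearby closed-loop matrices.

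Step one is the perturbation estimate. For any $\Theta = \bmat{A & B}$ and any $K$, write $A + BK - A_{\mathrm{cl},\star} = (A - A_\star) + (B - B_\star)K + B_\star(K - K_\mathrm{stab})$. For $\Theta \in \calC_0$, \eqref{eq:lam_bound_on_error} (or simply Assumption~\ref{asm:C0} with the triangle inequality) gives $\sfnorm{\Theta - \Theta_\star} \le 2/\lambda$. For $K \in \calK$ we have $\sfnorm{K} \le \sfnorm{K_\mathrm{stab}} + r_K$. Hence
\begin{align*}
\bfnorm{A + BK - A_{\mathrm{cl},\star}} \le \tfrac{2}{\lambda}\bigl(1 + \sfnorm{K_\mathrm{stab}} + r_K\bigr) + \sfnorm{B_\star}\, r_K =: \varepsilon(\lambda, r_K).
\end{align*}
By choosing $r_K$ small and then $\lambda_\star$ large, $\varepsilon$ can be made as small as we like, uniformly over $\Theta \in \calC_0$ and $K \in \calK$. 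One subtlety is that $\hat\Theta_0$ depends on $\lambda$ only through the assumption $\Theta_\star \in \calC_0$. Centring everything at the fixed $\Theta_\star$ rather than at $\hat\Theta_0$ avoids this circularity.

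Step two expands the Lyapunov inequality. Write $M = A + BK = A_{\mathrm{cl},\star} + \Delta$ with $\snorm{\Delta} \le \varepsilon$. Then
\begin{align*}
M^\top P M \preceq A_{\mathrm{cl},\star}^\top P A_{\mathrm{cl},\star} + \bigl(2\snorm{A_{\mathrm{cl},\star}}\snorm{P}\varepsilon + \snorm{P}\varepsilon^2\bigr) I .
\end{align*}
Taking $\varepsilon$ small enough that the bracket is at most $1/2$, and using $I \preceq P$, gives $M^\top P M \preceq (1 - \tfrac{1}{2\snorm{P}}) P =: \rho^2 P$, with $\rho \in (0,1)$. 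Iterating yields $(M^k)^\top P M^k \preceq \rho^{2k} P$, so $\snorm{M^k} \le \sqrt{\snorm{P}}\,\rho^k$, using $\lambda_{\min}(P) \ge 1$. Converting to the Frobenius norm costs a factor of $\sqrt{\dx}$. So $C := \sqrt{\dx\snorm{P}} \ge 1$ works, and $\rho$ is as above. All constants depend only on $\Theta_\star$, $K_\mathrm{stab}$, and the dimensions, which gives the claimed uniformity over $\lambda \ge \lambda_\star$, $\Theta \in \calC_0$, $K \in \calK$, and $k \ge 0$.

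The main obstacle is bookkeeping rather than depth. The key is to make $r_K$ and $\lambda_\star$ explicit enough that the required smallness, namely $2\snorm{A_{\mathrm{cl},\star}}\snorm{P}\varepsilon + \snorm{P}\varepsilon^2 \le 1/2$, is visibly achievable. It is also necessary to note that $K_\mathrm{stab}$ itself is fixed by the given $\hat\Theta_0$. I would first fix $r_K$ so that $\sfnorm{B_\star} r_K \le \varepsilon_0/2$, and then choose $\lambda_\star$ so that the $2/\lambda$ term is at most $\varepsilon_0/2$, where $\varepsilon_0$ is the threshold just stated.
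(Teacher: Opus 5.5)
Your proof is correct and is essentially the paper's argument: a Lyapunov certificate $P\succeq I$ for a nominal Schur closed loop, a quadratic perturbation bound on $M^\top P M$, choosing $r_K$ small and then $\lambda_\star$ large, with $C=\sqrt{\dx\|P\|_2}$ and $\rho^2=1-1/(2\|P\|_2)$. The only difference is the centre: the paper uses $\hat A_0+\hat B_0K_\mathrm{stab}$, which is Schur because $K_\mathrm{stab}$ is the certainty-equivalent LQR gain, so it works with radius $1/\lambda$ and needs neither Assumption~\ref{asm:Kstab} nor $\Theta_\star\in\calC_0$, whereas you centre at $A_\star+B_\star K_\mathrm{stab}$ and rely on both. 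Your stated reason for recentring does not hold up, though: with $\hat\Theta_0$ fixed there is no circularity in centring at $\hat\Theta_0$, and if $\hat\Theta_0$ varied with $\lambda$, your $P$ would still depend on it through $K_\mathrm{stab}$.
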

Next we show that the aforementioned optimistic synthesis procedure results in controllers that lie in the set $\calK$.
\begin{lemma} \label{lem:K_in_calK}
Consider $K_\mathrm{stab}$ and $\calK$ as defined in~\eqref{eq:Kstab_def} and~\eqref{eq:calK_def} for some $r_K>0$.
Then there exists $\lambda_\star<\infty$ such that $\forall \lambda\geq \lambda_\star$ and $\forall t \geq 0$, controllers $K_\star$ as well as $K_t$ given by \eqref{eq: optimistic synthesis} belong to $\mathcal K$.
\end{lemma}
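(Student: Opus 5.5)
We argue by continuity of the solution of the generalized Riccati equation with respect to its data. The map $(A,B,\tilde Q,\tilde R,\tilde N)\mapsto K(A,B,\tilde Q,\tilde R,\tilde N)$ from \eqref{eq: gen lqr} is continuous, and in fact locally Lipschitz, in a neighborhood of $(\hat A_0,\hat B_0,Q,R,0)$. This holds because $(\hat A_0,\hat B_0)$ is stabilizable by Assumption~\ref{asm:C0}, and because $\mathrm{diag}(Q,R)\succ 0$ makes the stabilizing DARE solution an analytic function of the data near this point. By the standard implicit-function argument, the Jacobian of the Riccati map is invertible there, since the closed loop $\hat A_0+\hat B_0K_\mathrm{stab}$ is Schur stable. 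Hence there exists $\epsilon>0$ such that every perturbation of size at most $\epsilon$ in $(A,B)$ and in the cost matrix yields a gain within $r_K$ of $K_\mathrm{stab}$. The whole proof therefore reduces to showing that, for $\lambda$ large, both $K_\star$ and $K_t$ arise as LQR gains of $\epsilon$-perturbed problems.

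For $K_\star=K(A_\star,B_\star,Q,R,0)$, Assumption~\ref{asm:C0} gives $\sfnorm{\Theta_\star-\hat\Theta_0}\le 1/\lambda$. Choosing $\lambda\ge 1/\epsilon$ then places $K_\star\in\calK$ immediately.

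For $K_t$, the estimate satisfies $\hat\Theta_t\in\calC_0$ because of the projection in \eqref{eq:pRLS}. Its model perturbation is therefore also at most $1/\lambda$. The cost perturbation needs more care, and we split into two cases.

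\emph{Case $\norm{\calV_t}\le c(\lambda)$.} Here $K_t$ solves \eqref{eq:K_primary}, a generalized LQR problem with weight $\mathrm{diag}(Q,R)-\calV_t$. This weight deviates from $\mathrm{diag}(Q,R)$ by at most $c(\lambda)$, and $c(\lambda)=4L_J(\bar\Theta/\lambda+1/\lambda^2)\to 0$ as $\lambda\to\infty$. The cross term equals the off-diagonal block of $-\calV_t$, so it is also bounded by $c(\lambda)$.

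\emph{Case $\norm{\calV_t}> c(\lambda)$.} Here $K_t$ solves the min-max problem \eqref{eq: optimistic synthesis}, which is no longer a single Riccati problem.

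One subtlety is circularity. The constants $L_J$, $\bar K$ and $C,\rho$ entering $c(\lambda)$ depend on $\calK$, which is fixed by $r_K$, and on $\lambda_\star$ from Lemma~\ref{lem:every_theta_stable}. Since $\bar\Theta$ only shrinks as $\lambda$ grows, these constants stay uniformly bounded once $\lambda$ exceeds a threshold, so $c(\lambda)\to 0$ is legitimate.

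The main obstacle is the second case. Our plan is to use the SDP/epigraph view. The objective $\max\{\tilde J(\cdot,\hat\Theta_t,\calV_t),\tilde J(\cdot,\hat\Theta_t,c(\lambda)I)\}$ is bounded below on stabilizing gains by $\tilde J(\cdot,\hat\Theta_t,c(\lambda)I)$. It is also bounded above by $J(K,\hat\Theta_t)$, since both bonuses are positive semidefinite. Writing $\Sigma_K$ for the stationary state covariance of gain $K$ under $\hat\Theta_t$, the two bounds differ by at most $c(\lambda)\,\mathrm{tr}\big(\Sigma_K(I+K^\top K)\big)$.

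Strong convexity of the LQR cost near its optimizer then gives the estimate. Concretely, the quadratic growth $J(K,\hat\Theta_t)-J(K_{\hat\Theta_t},\hat\Theta_t)\ge \mu\sfnorm{K-K_{\hat\Theta_t}}^2$ holds locally with $\mu$ controlled by $\sigma_\mathsf{w}^2\lambda_{\min}(R)$. Combined with the sandwich above, it shows that the minimizer $K_t$ lies within $O(\sqrt{c(\lambda)})$ of $K_{\hat\Theta_t}:=K(\hat A_t,\hat B_t,Q,R,0)$. The first step already gives $\sfnorm{K_{\hat\Theta_t}-K_\mathrm{stab}}\le r_K/2$ for large $\lambda$.

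To make this rigorous, we first restrict the min-max problem to a sublevel set. This ensures that $\Sigma_K$ stays bounded, which follows from coercivity of $\tilde J(\cdot,\hat\Theta_t,c(\lambda)I)$ under \eqref{eq:lambdamin}. Only then do we apply the quadratic growth bound. Taking $\lambda_\star$ as the maximum of the resulting thresholds finishes the proof.
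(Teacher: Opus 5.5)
Your proof is correct. For $K_\star$ and the primary case it is the paper's continuity argument. In the fallback case you use the same two ingredients as the paper: the sandwich $J(K,\hat\Theta_t)-c(\lambda)\,\mathrm{tr}(\tilde\Sigma^K_{\hat\Theta_t})\le \bar J(K,\hat\Theta_t,\mathcal V_t)\le J(K,\hat\Theta_t)$, and quadratic growth of an LQR cost. The difference is the controller you compare against.

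The paper compares $K_t$ with $K_\circ:=\arg\min_K \tilde J(K,\hat\Theta_t,c(\lambda)I)$ and uses growth of the \emph{modified} cost, with constant $\sigma_\mathsf{w}^2(\lambda_{\min}(R)-c(\lambda))$. Its chain $\tilde J(K_t,\hat\Theta_t,c(\lambda)I)\le \bar J(K_t,\hat\Theta_t,\mathcal V_t)\le \bar J(K_\circ,\hat\Theta_t,\mathcal V_t)\le \tilde J(K_\circ,\hat\Theta_t,c(\lambda)I)+c(\lambda)\,\mathrm{tr}(\tilde\Sigma^{K_\circ}_{\hat\Theta_t})$ evaluates the slack at $K_\circ$. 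Since $K_\circ$ is already known to lie in $\mathcal K$, nothing about $K_t$ has to be controlled in advance.

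You compare with the plain certainty-equivalent gain $K(\hat A_t,\hat B_t,Q,R,0)$ and use growth of the unmodified cost, with constant $\sigma_\mathsf{w}^2\lambda_{\min}(R)$. This lets you reuse your continuity bound for the comparison controller with no cost perturbation at all. The price is that your slack $c(\lambda)\,\mathrm{tr}(\tilde\Sigma^{K_t}_{\hat\Theta_t})$ sits at the unknown minimizer, which is why you need the extra sublevel-set step. That step does close:
$(\lambda_{\min}(\mathrm{diag}(Q,R))-c(\lambda))\,\mathrm{tr}(\tilde\Sigma^{K_t}_{\hat\Theta_t})\le \tilde J(K_t,\hat\Theta_t,c(\lambda)I)\le \bar J(K_t,\hat\Theta_t,\mathcal V_t)\le J(K(\hat A_t,\hat B_t,Q,R,0),\hat\Theta_t)$,
and the right-hand side is uniformly bounded for large $\lambda$. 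So the paper's route is slightly more economical, while yours trades one extra estimate for a simpler comparison controller.

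One point to tighten: do not justify the growth bound by ``strong convexity near the optimizer.'' A merely local bound would be circular here, because applying it to $K_t$ presupposes that $K_t$ is already close to the comparison gain. Use instead the exact performance-difference identity (the Fazel et al.\ lemma the paper cites). With $K^\circ:=K(\hat A_t,\hat B_t,Q,R,0)$ and $P$ its Riccati solution, it reads $J(K,\hat\Theta_t)-J(K^\circ,\hat\Theta_t)=\mathrm{tr}\bigl(\Sigma^{K}_{\hat\Theta_t}(K-K^\circ)^\top(R+\hat B_t^\top P\hat B_t)(K-K^\circ)\bigr)$ for every stabilizing $K$. Combined with $\Sigma^{K}_{\hat\Theta_t}\succeq\sigma_\mathsf{w}^2 I$, it gives your constant $\sigma_\mathsf{w}^2\lambda_{\min}(R)$ globally. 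You also need that $K_t$ stabilizes $\hat\Theta_t$. This follows from finiteness of $\tilde J(K_t,\hat\Theta_t,c(\lambda)I)$, because $c(\lambda)<\lambda_{\min}(\mathrm{diag}(Q,R))$ makes that weight positive definite.
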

Finally, we can state the aforementioned lower bound result.

\begin{theorem} \label{thm:optimstic_is_lower_bound}
    Suppose that~\eqref{eq:id_success_event} holds for some time step $t$ and that $\lambda$ is large enough such that~\eqref{eq:lambdamin} and Lemmas~\ref{lem:every_theta_stable}-\ref{lem:K_in_calK} apply. Then $\forall K \in \calK$, it holds that
    \begin{align} \label{eq:Jbar_optimistic}
        \max\{\tilde{J}(K,\hat{\Theta}_t, \calV_t),\tilde{J}(K, \hat{\Theta}_t, c(\lambda)I)\} \leq J(K, \Theta_\star).
    \end{align}
\end{theorem}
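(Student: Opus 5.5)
It suffices to show separately that $\tilde{J}(K,\hat{\Theta}_t,\calV_t)\le J(K,\Theta_\star)$ and $\tilde{J}(K,\hat{\Theta}_t,c(\lambda)I)\le J(K,\Theta_\star)$ for every $K\in\calK$; the maximum then inherits the bound.

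\textbf{Rewriting the costs.} Fix $K\in\calK$ and write $M(K):=\bmat{I\\ K}$. By Lemma~\ref{lem:every_theta_stable}, both $A_\star+B_\star K$ and $\hat A_t+\hat B_t K$ are stable, with $\fnorm{(A+BK)^k}\le C\rho^k$. Each ergodic cost therefore equals a trace against a stationary covariance. Let $\Sigma(\Theta):=\dlyap\big((A+BK)^\top\cdots\big)$, or more simply
\begin{align*}
\Sigma(\Theta):=\sigma_\mathsf{w}^2\sum_{k\ge0}(A+BK)^k(A+BK)^{k\top},
\qquad \|\Sigma(\Theta)\|\lesssim \sigma_\mathsf{w}^2 C^2/(1-\rho^2).
\end{align*}
Then
\begin{align*}
J(K,\Theta)=\mathrm{tr}\big(M^\top \mathrm{diag}(Q,R)M\,\Sigma(\Theta)\big),
\qquad
\tilde J(K,\Theta,\calV)=J(K,\Theta)-\mathrm{tr}\big(M^\top\calV M\,\Sigma(\Theta)\big).
\end{align*}
Optimism is thus equivalent to
\begin{align*}
J(K,\hat\Theta_t)-J(K,\Theta_\star)\le \mathrm{tr}\big(\calV\,M\Sigma(\hat\Theta_t)M^\top\big).
\end{align*}

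\textbf{Perturbation bound.} I will bound $J(K,\hat\Theta_t)-J(K,\Theta_\star)$ through a value-function (Lyapunov) perturbation identity. Let $P_K:=\dlyap\big((\hat A_t+\hat B_tK)^\top,\,M^\top\mathrm{diag}(Q,R)M\big)$ and $\Delta:=\Theta_\star-\hat\Theta_t$. A standard simulation-lemma computation, expanding $P_K$ along the true closed loop, gives
\begin{align*}
J(K,\hat\Theta_t)-J(K,\Theta_\star)=\mathrm{tr}\Big(\big[(\hat\Theta_t M)^\top P_K(\hat\Theta_t M)-(\Theta_\star M)^\top P_K(\Theta_\star M)\big]\Sigma(\Theta_\star)\Big).
\end{align*}
Writing $\hat\Theta_t=\Theta_\star-\Delta$, the bracketed term equals $-2\,\mathrm{sym}\big((\Theta_\star M)^\top P_K\Delta M\big)+M^\top\Delta^\top P_K\Delta M$. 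The expression symmetric in the two parameters (the ordering with $\Sigma(\hat\Theta_t)$ and $P_K$ under $\Theta_\star$) is the one I want, so I would instead expand around $\hat\Theta_t$ so that $\Sigma(\hat\Theta_t)$ appears. This produces
\begin{align*}
\mathrm{tr}\big(M^\top(\Delta^\top P\Delta+2\,\mathrm{sym}(\Theta^\top P\Delta))M\,\Sigma(\hat\Theta_t)\big),
\end{align*}
with $\|P\|\le C^2(\fnorm{Q}+\bar K^2\fnorm{R})/(1-\rho^2)$. The $\sigma_\mathsf{w}^2$ factors must be tracked consistently; I expect them to cancel into the normalisation of $L_J$.

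\textbf{Two regimes.} Each term is bounded by a quadratic form $z^\top(\cdot)z$ with $z\in\mathrm{range}(M)$, so it suffices to prove the matrix inequality
\begin{align*}
\Delta^\top P\Delta+2\,\mathrm{sym}(\Theta^\top P\Delta)\preceq \calV.
\end{align*}
For the first bonus, use $\|\Delta V_t^{1/2}\|_F\le\beta_t$ from \eqref{eq:id_success_event}:
\begin{align*}
\Delta^\top P\Delta &\preceq \|P\|\,V_t^{-1/2}(V_t^{1/2}\Delta^\top\Delta V_t^{1/2})V_t^{-1/2}\preceq \|P\|\beta_t^2V_t^{-1},\\
2\,\mathrm{sym}(\Theta^\top P\Delta) &\preceq 2\|P\|\bar\Theta\beta_t\|V_t\|^{1/2}V_t^{-1}.
\end{align*}
The second line follows by writing $\Delta=(\Delta V_t^{1/2})V_t^{-1/2}$, applying Cauchy–Schwarz $2\,\mathrm{sym}(X^\top Y)\preceq$ (suitable weighting), and paying a factor $\|V_t\|^{1/2}$ to return to $V_t^{-1}$. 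Together these give exactly $g_tV_t^{-1}$. For the second bonus, use \eqref{eq:lam_bound_on_error}, $\fnorm{\Delta}\le 2/\lambda$, to get
\begin{align*}
\Delta^\top P\Delta+2\,\mathrm{sym}(\Theta^\top P\Delta)\preceq L_J\big(4/\lambda^2+4\bar\Theta/\lambda\big)I=c(\lambda)I.
\end{align*}

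\textbf{Main obstacle.} The delicate step is the cross term: obtaining a PSD bound of order $\beta_t\|V_t\|^{1/2}V_t^{-1}$ rather than a scalar trace bound. I would first try the inequality $X^\top Y+Y^\top X\preceq \epsilon X^\top X+\epsilon^{-1}Y^\top Y$ with $Y=\Delta V_t^{1/2}V_t^{-1/2}$, and choose $\epsilon$ so that both pieces become multiples of $V_t^{-1}$, using $I\preceq\|V_t\|V_t^{-1}$. A further subtlety is that $\hat\Theta_t\in\calC_0$ may be either the estimate or the truth, so I use $\bar\Theta$ as a bound on both. Finally, condition \eqref{eq:lambdamin} is needed only to ensure that the $\tilde J$ quantities are well defined.
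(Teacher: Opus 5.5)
Your proposal is correct and reproduces exactly the paper's constants $g_t$ and $c(\lambda)$, but the mechanism differs from the paper's.

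The paper applies Lemma~\ref{lem:Lyapunov_perturbation} to get a Frobenius-norm bound on $\Sigma^K_{\Theta_\star}-\Sigma^K_{\hat\Theta_t}$. It then separates off $\fnorm{Q+K^\top RK}$ by Cauchy--Schwarz in the Frobenius inner product. Finally it turns $\diamondsuit,\spadesuit$ into $\trace(V_t^{-1}\tilde\Sigma^K_{\hat\Theta_t})$ or $\trace(\tilde\Sigma^K_{\hat\Theta_t})$ via scalar inequalities such as $\sqrt{\trace(X^2)}\le\trace(X)$ for $X\succeq 0$.

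You instead keep the exact identity $J(K,\hat\Theta_t)-J(K,\Theta_\star)=\trace(E\,\tilde\Sigma^K_{\hat\Theta_t})$ and bound $E$ in the Loewner order. This identity is the matrix $\tilde Q$ from the proof of Lemma~\ref{lem:Lyapunov_perturbation} paired with the value matrix $P=\dlyap((A_\star+B_\star K)^\top,\,Q+K^\top RK)$. Concretely, $E=\Delta^\top P\Delta-2\,\mathrm{sym}(\Theta_\star^\top P\Delta)=-\Delta^\top P\Delta-2\,\mathrm{sym}(\hat\Theta_t^\top P\Delta)$. Because you bound both pieces in magnitude, your loose signs are harmless. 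With $\|P\|_2\le L_J$, your weighted AM--GM step with $\epsilon=\beta_t/(\bar\Theta\|V_t\|_2^{1/2})$ and $I\preceq\|V_t\|_2V_t^{-1}$ gives exactly $g_tV_t^{-1}$. Likewise, $\fnorm{\Delta}\le 2/\lambda$ gives exactly $c(\lambda)I$.

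Your route buys transparency. It shows $-\calV\preceq E\preceq\calV$ for both bonuses, so the two-sided Lemma~\ref{lem:exploration_bounds_cost_diff} comes for free. The second form of $E$ also shows that, for the one-sided inequality of this theorem alone, the $\beta_t^2$ (resp.\ $4/\lambda^2$) part of the bonus is slack; only the reverse inequality, used in the regret proof, needs it.

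The paper's covariance-centric route buys a standalone norm bound on $\Sigma^K_{\Theta_\star}-\Sigma^K_{\hat\Theta_t}$. This bound is reused when bounding $\calR_T^{(2)}$, where a cost-difference identity alone would not suffice.

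A minor point: \eqref{eq:lambdamin} is not needed even for well-definedness here. For fixed $K\in\calK$ all quantities are finite traces, and that condition matters only for the synthesis problem to be bounded below.
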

Theorem~\ref{thm:optimstic_is_lower_bound} provides an upper bound of the performance of the optimistic controller in terms of the optimal controller, enabling us to derive sublinear regret in the following. It uses the intermediate result that the difference in the performance of a given controller under the true and estimated system is bounded by the exploration bonus, which in turn is made possible by the appropriate design of the exploration bonus. In particular,
$g_t$ in~\eqref{eq:gt} scales the bonus such the the aforementioned upper bound holds, and it also ensures that it decays with the appropriate rate of $\tilde{\calO}(1/\sqrt{t})$.
The proofs of Lemmas~\ref{lem:every_theta_stable}-\ref{lem:K_in_calK} and Theorem~\ref{thm:optimstic_is_lower_bound} can be found in Appendix~\ref{sec:app:normlemma}, \ref{sec:app:calKlemma}, and~\ref{sec:app:optim}.

\subsection{Algorithm and Discussion}

Building on the observation that this synthesis procedure is optimistic, we draw inspiration from existing methods following the OFU principle~\cite{abbasi2011regret, cohen2019learning, abeille2020efficient} to devise an algorithm with sublinear regret guarantees. The proposed algorithm (IR-LQR) 
is presented in Algorithm~\ref{alg:IR-LQR}.

\begin{algorithm}
\caption{Online LQR with Intrinsic Rewards (IR-LQR)} \label{alg:IR-LQR}
\begin{algorithmic}[1]
\State \textbf{Initialization:} set $t = 0$, $\tau = 0$, $m=0$, synthesize $K_0$ according to~\eqref{eq: optimistic synthesis} using initial estimates $\hat{A}_0$, $\hat{B}_0$, and apply $u_0=K_0x_0$
\For{$t = 1, \dots, T-1$}
    \State observe $x_t$
    \State compute $V_t$ \eqref{eq:V_def}
    \If{$\det(V_t) > 2\det(V_\tau)$ \textbf{ and } $t - \tau \geq \underline{\tau}(m, t)$ } \label{algstep:info_gain_criterion}
        \State set $\tau = t$, $m \gets m+1$
        \State compute $\hat{\Theta}_t$ \eqref{eq:pRLS} and $g_t$ \eqref{eq:gt}
        \State synthesize $K_t$ according to~\eqref{eq: optimistic synthesis}
    \Else
        \State set $K_t = K_{t-1}$
    \EndIf
    \State apply $u_t = K_tx_t$
\EndFor
\end{algorithmic}
\end{algorithm}

\looseness-1
Algorithm~\ref{alg:IR-LQR} is non-episodic in the sense that it is applied to a single continuous trajectory of the system without requiring restarts. However, as is customary in online LQR~\cite{abbasi2011regret, cohen2019learning, abeille2020efficient, simchowitz2020naive}, it operates in \emph{epochs}, during which the control policy is kept fixed. After observing the current state $x_t$ and updating the empirical covariance matrix $V_t$, IR-LQR checks whether sufficient new information has been collected since the start of the current epoch (denoted by time index $\tau$). In particular, the criterion is that the information gain~\citep{shannon1948mathematical} difference $I(\Theta; \calD_t) - I(\Theta; \calD_\tau)$ exceeds a constant threshold of $\dx/2$ bits, which can be equivalently expressed as $\det(V_t) > 2 \det(V_\tau)$ in our setting. When this criterion is met and the epoch length exceeds a lower bound $\underline{\tau}(m, t)$, a new controller is synthesized and a new epoch begins. The minimum epoch length is set long enough to avoid switching too frequently, and it is defined in~\eqref{eq:tau_underline_def} in the Appendix. Such a combination of an information gain criterion and a minimum epoch length for online LQR has also been used in~\cite{lale2022reinforcement}.
\section{Regret Analysis} \label{sec:analysis}

We now proceed by stating the main result of the paper, which is a high probability bound on the regret of the proposed IR-LQR algorithm (Algorithm~\ref{alg:IR-LQR}).

\begin{theorem} \label{thm:IR-LQR_regret}
    Let $\delta \in (0,1)$. Choose $\lambda$ (depending on $\delta$ but independently from $T$) large enough such that~\eqref{eq:lambdamin} and Lemmas~\ref{lem:every_theta_stable}-\ref{lem:K_in_calK} apply, and suppose that Assumption~\ref{asm:C0} holds. Then, with probability at least $1-\delta$, IR-LQR achieves the cumulative regret $\calR_T^\text{IR-LQR} = \Tilde{\calO}(\sqrt{T}).$
\end{theorem}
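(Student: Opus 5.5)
We will follow the standard optimism-based regret decomposition used in~\cite{abbasi2011regret, abeille2020efficient}. Throughout, we work on the event $\calE$ where~\eqref{eq:id_success_event} holds for all $t$ (probability at least $1-\delta_\mathrm{ID}$ by Lemma~\ref{lem:id_success_event}), intersected with a noise-concentration event controlling $\max_t \snorm{w_t}$ and the relevant martingale sums. We allocate $\delta/2$ to each. On $\calE$, Lemmas~\ref{lem:every_theta_stable}--\ref{lem:K_in_calK} give $K_t \in \calK$ and uniform exponential stability $\fnorm{(A_\star+B_\star K_t)^k}\le C\rho^k$. Consequently, the state stays bounded by $\tilde{\calO}(1)$ between switches, the switching overhead is controlled by the minimum epoch length $\underline{\tau}$, and $\snorm{V_t} = \tilde{\calO}(t)$.

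\textbf{Decomposition.} For epoch $m$ starting at $\tau_m$ with controller $K_{\tau_m}$, we write
\begin{align*}
\calR_T = \sum_t \big(\ell(x_t,u_t) - J(K_t,\Theta_\star)\big) + \sum_t \big(J(K_t,\Theta_\star) - \tilde J_t(K_t)\big) + \sum_t \big(\tilde J_t(K_t) - J(K_\star,\Theta_\star)\big),
\end{align*}
where $\tilde J_t(K)$ denotes the max-objective in~\eqref{eq: optimistic synthesis} evaluated at $\hat\Theta_{\tau_m}$.

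The third sum is nonpositive. Indeed, $K_t$ minimizes $\tilde J_t$ and $K_\star\in\calK$ by Lemma~\ref{lem:K_in_calK}, so Theorem~\ref{thm:optimstic_is_lower_bound} gives $\tilde J_t(K_t)\le \tilde J_t(K_\star)\le J(K_\star,\Theta_\star)$.

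The first sum is handled with the usual value-function argument. We write $\ell(x_t,u_t)-J(K_t,\Theta_\star)$ as $x_t^\top P_t x_t - \E[x_{t+1}^\top P_t x_{t+1}\mid\calF_t]$ plus a martingale difference, where $P_t=\dlyap(A_\star+B_\star K_t,\,Q+K_t^\top R K_t)$. The martingale term is $\tilde{\calO}(\sqrt T)$ by Azuma/Freedman, using bounded states. The telescoping term only incurs a bounded cost at each switch. The number of switches is $\calO(d\log T)$ because of the determinant-doubling criterion, so this contributes only logarithmically, provided $\underline{\tau}$ is chosen to allow transients to die out.

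\textbf{Main step (second sum).} This is the step I expect to be the hardest. We need to bound $J(K_t,\Theta_\star)-\tilde J_t(K_t)$. Both objectives equal $\sigma_\mathsf{w}^2\,\mathrm{tr}$ of a Lyapunov solution. The difference splits into two parts. The first is the model-mismatch part $J(K,\Theta_\star)-J(K,\hat\Theta)$, which the proof of Theorem~\ref{thm:optimstic_is_lower_bound} bounds by $\mathrm{tr}(\calV_t\Sigma_K)$ or by $c(\lambda)$ times a trace, with $\Sigma_K$ the stationary state-input covariance. The second is the bonus term itself, $\mathrm{tr}(\calV_t \Sigma_K(\hat\Theta))$. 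Hence the per-step gap is at most about $2 g_{\tau_m}\,\mathrm{tr}(V_{\tau_m}^{-1}\Sigma_{K_t})$. Since $g_t=\tilde{\calO}(\sqrt t)$ (because $\beta_t=\tilde{\calO}(1)$ and $\snorm{V_t}^{1/2}=\tilde{\calO}(\sqrt t)$), the crux is to show
\begin{align*}
\sum_t \mathrm{tr}(V_{\tau_m}^{-1}\Sigma_{K_t}) = \tilde{\calO}(1).
\end{align*}
Multiplying by $g_T$ then gives $\tilde{\calO}(\sqrt T)$.

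My plan for this bound has three parts. First, I replace $\Sigma_{K_t}$ by the empirical $z_tz_t^\top$ with $z_t=[x_t;u_t]$. This costs another martingale error of order $\sqrt T\cdot\max\|V^{-1}\|$, which is $\tilde{\calO}(\sqrt T)/\sqrt T$-scaled after including $g$; I would do this carefully using the fact that $\sigma_\mathsf{w}^2 I$ excitation makes $\Sigma_K\succeq \sigma_\mathsf{w}^2$-scaled terms comparable to empirical averages over an epoch of length at least $\underline{\tau}$. Second, I replace $V_{\tau_m}^{-1}$ by $V_t^{-1}$, which costs a factor at most $2$ because $\det V_t\le 2\det V_{\tau_m}$ within an epoch. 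Third, I apply the elliptical potential lemma, $\sum_t \min(1,z_t^\top V_t^{-1}z_t)\le 2(\dx+\du)\log(\det V_T/\det\lambda I)$. The $\min$ is removable because $\|z_t\|^2/\lambda$ stays bounded for large $\lambda$.

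If the $\sqrt t$ growth of $g_t$ turns out to break this, the fallback is the cruder approach: bound $\mathrm{tr}(V^{-1}\Sigma)\le \|\Sigma\|\,\mathrm{tr}(V^{-1})$ and use the $\sqrt{t}$ lower growth of $\lambda_{\min}(V_t)$ that the bonus-driven exploration should induce, which gives $g_t\,\mathrm{tr}(V_t^{-1})=\tilde{\calO}(1/\sqrt t)$ as stated in the text. Summing this over $t$ yields $\tilde{\calO}(\sqrt T)$. Finally, a union bound over the events completes the proof with probability at least $1-\delta$.
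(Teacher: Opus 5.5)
Your architecture matches the paper's. The third sum is handled by optimism exactly as in the paper. The optimism gap is bounded by $2\trace(\calV_{\tau_m}\tilde{\Sigma}^{K_{\tau_m}}_{\hat{\Theta}_{\tau_m}})$ via Lemma~\ref{lem:exploration_bounds_cost_diff}. Your ``crux'' is the paper's Lemma~\ref{lem:sum_bound}, proved by the same three moves (stationary-to-empirical comparison over epochs of length at least $\underline{\tau}$, passing from $V_{\tau_m}$ to $V_t$ via the determinant criterion, elliptical potential) and then multiplied by $g_T=\tilde{\calO}(\sqrt{T})$.

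The genuine difference is the first sum. You use value-function telescoping plus a scalar martingale inequality, as in \cite{abbasi2011regret}. The paper instead bounds each $\Delta_m$ by $\tilde{\calO}(\sqrt{\tau_{m+1}-\tau_m})$ through the epoch-wise covariance concentration event $\calE^{\mathrm{conc}}$, then sums using $M=\calO(\log T)$. Your route needs only bounded states for that term and avoids the $\sqrt{M}$ factor. The paper's route is more economical: $\calE^{\mathrm{conc}}$ is needed for the crux anyway, so it handles the first sum at no extra cost.

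Three points in the crux need tightening.

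First, the replacement of $\tilde{\Sigma}$ by $z_tz_t^\top$ must be multiplicative, not additive. An additive error of order $\sqrt{T}\max_t\norm{V_t^{-1}}_2\le\sqrt{T}/\lambda$, multiplied by $g_T$, is $\tilde{\calO}(T/\lambda)$. That is linear regret, because $\lambda$ does not grow with $T$. What works is the idea in your second sentence. On the concentration event the epoch error is at most $\bar{\alpha}\sqrt{N}$ in operator norm, while $\sum_{t}x_tx_t^\top\succeq\frac{\sigma_\mathsf{w}^2N}{2}I$ once $N\ge 4\bar{\alpha}^2/\sigma_\mathsf{w}^4$. The error is then absorbed at a constant factor, which is exactly why $\underline{\tau}$ in~\eqref{eq:tau_underline_def} has that form. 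This requires an operator-norm covariance concentration bound valid uniformly over random epoch starts and lengths (Lemmas~\ref{lem:conc_m}--\ref{lem:high_prob}), not just control of $\max_t\snorm{w_t}$. Before that comparison you also need to pass from $\tilde{\Sigma}^{K}_{\hat{\Theta}}$ to $\tilde{\Sigma}^{K}_{\Theta_\star}$, for example via $\Sigma^K_{\hat{\Theta}}\preceq(\norm{\Sigma^K_{\hat{\Theta}}}_2/\sigma_\mathsf{w}^2)\,\Sigma^K_{\Theta_\star}$.

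Second, $\det V_t\le 2\det V_{\tau_m}$ need not hold within an epoch when it is the minimum-length condition that delays the switch. Such epochs have length $\mathsf{poly}\log T$. So the crude bound $\norm{V_{\tau_m}^{-1/2}V_t^{1/2}}_2^2\le 1+(t-\tau_m)\max_k\snorm{z_k}^2/\lambda$, which is what the paper uses, costs only a polylogarithmic factor. Likewise, when you remove the $\min$ in the elliptical potential lemma, $\snorm{z_t}^2/\lambda$ is $\calO(\log T)$ rather than bounded.

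Third, drop the fallback. Nothing establishes $\lambda_{\min}(V_t)\gtrsim\sqrt{t}$. Even granting it, $g_t\trace(V_t^{-1})$ would be $\tilde{\calO}(1)$ rather than $\tilde{\calO}(1/\sqrt{t})$, which sums to linear regret. Once the first point is done multiplicatively, the $\sqrt{t}$ growth of $g_t$ is harmless and no fallback is needed.
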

\begin{proof}
    The proof can be found in Appendix~\ref{sec:app:main}.
\end{proof}

Prior work has shown that $\calR_T = \tilde{\calO}(\sqrt{T})$ is the best attainable rate in general for the online LQR problem~\cite{simchowitz2020naive, ziemann2024regret}. This lower bound has  been achieved up to problem constants and polylogarithmic terms by a variety of existing methods including other optimism-based approaches~\cite{abbasi2011regret, cohen2019learning, abeille2020efficient}, as well as methods following certainty equivalent design combined with naive, undirected exploration~\cite{mania2019certainty, simchowitz2020naive}.

However, to the authors' best knowledge, this is the first work that shows that $\calR_T = \tilde{\calO}(\sqrt{T})$ can be achieved by using an exploration bonus which typically maintains the structure of the LQR problem. Most closely related is the method proposed in~\cite{cohen2019learning} that modifies the SDP formulation of the LQR, leading to a computationally harder synthesis problem, and more complicated analysis. Beyond the LQR problem, the regret of using similar intrinsic rewards for model-based RL has been shown to scale with $\sqrt{T}$ for nonlinear dynamics represented by Gaussian processes~\cite{sukhija2025optimism}; however, these results rely on a synthesis oracle. We show that in the linear setting, a modification of this results in synthesis problems that are  generalized LQR problems. A further advantage of IR-LQR compared to other optimistic online LQR methods is that the required accuracy of the initial estimate $\hat{\Theta}_0$ is independent of $T$, whereas the initial estimate for~\cite{abeille2020efficient} and~\cite{cohen2019learning} needs to be accurate within $1/\lambda = \calO(1/\log T)$ and $1/\lambda = \calO(1/\sqrt{T})$, respectively. A summary of this discussion can be found in Table~\ref{tab:methods_comparison}.

\begin{table}
\renewcommand{\arraystretch}{1.3}
\caption{Comparison of online LQR algorithms.}
\label{tab:methods_comparison}
\begin{center}
    \begin{tabular}{@{}lccc@{}}
    \toprule
     &  \shortstack{Synthesis \\ problem} & \shortstack{Noise \\ injection} &  \shortstack{Requirement \\ on $1/\lambda$} \\
    \midrule
    CEC+PE~\cite{simchowitz2020naive} & LQR & yes & $\calO(1)$\footnotemark\\
    OSLO~\cite{cohen2019learning} & SDP & no & $\calO(1/\sqrt{T})$ \\
    LagLQ~\cite{abeille2020efficient} & iterative search & no & $\calO(1/\log T)$ \\ 
    TS~\cite{abeille2017thompson} & LQR & yes\footnotemark & $\calO(1)$ \\
    IR-LQR (ours) & LQR\footnotemark & no & $\calO(1)$\\
    \bottomrule
\end{tabular}
\end{center}
\vspace{-0.5cm}
\end{table}
\footnotetext[2]{This method uses a warm-up phase rather than an initial estimate, but the regret analysis presented there could be modified to instead work with a prior that is independent of $T$.}
\footnotetext[3]{The method in~\cite{abeille2017thompson} does not inject noise, but their theory only applies to scalar systems. The state-of-the-art theory of Thompson sampling in online LQR~\cite{kargin2022thompson} requires noise injection.}
\footnotetext[4]{Only true for the primary controller, while the fallback problem can be exactly reformulated as an SDP.}
\section{Numerical Experiments} \label{sec:numerical}

In this section, we compare IR-LQR with state-of-the-art online LQR algorithms on an aircraft pitch angle control example\footnote{The example was taken from ``Control Tutorials for Matlab and Simulink", University of Michigan \url{https://ctms.engin.umich.edu/CTMS/?example=AircraftPitch&section=SystemModeling}}, and a UAV control problem taken from~\cite{lale2022reinforcement}.
The code to reproduce the results can be found online, along with several additional benchmark problems.\footnote{\url{https://github.com/lenarttreven/lqr_research} \\ The experiments are carried out using Python on a machine equipped with an Intel i7-12700H (2.30 GHz) processor with 32 GB of RAM.} The considered methods are: (1) Intrinsic Reward LQR (IR-LQR, Algorithm~\ref{alg:IR-LQR}), (2) Certainty Equivalence Control with Persistent Excitation (CEC+PE~\cite{simchowitz2020naive}), (3) Optimistic Semidefinite Programming for LQ Control (OSLO~\cite{cohen2019learning}), (4) Optimistic LQR with Lagrangian Relaxation (LagLQ~\cite{abeille2020efficient}), and (5) Thompson Sampling LQR (TS~\cite{abeille2017thompson}).

We carried out hyperparameter tuning for all of the algorithms. In particular, $g_t$ and $c(\lambda)$ depend on quantities that are not known in practice. Therefore, we set $c(\lambda) = 0.95\lambda_{\min}(\diag(Q,R))$ and $g_t = \gamma_1 +\gamma_2\|V_t\|^{1/2}$, where $\gamma_{1,2}$ were tuned via cross-validation. We sampled 40 random noise sequence realizations and ran the experiments with horizon $T=200$. The results are summarized in Figure~\ref{fig:plots}. 
The plots on the left show the cumulative regret suffered by the methods. The observed regret curves were consistent with sublinear growth, and their performances are comparable, although IR-LQR achieves the lowest median regret among them in both cases. We observed that IR-LQR is robust to the choice of hyperparameters in terms of success of the experiments and partially also with respect to the accumulated cost. Additionally, the fallback controller was never used in either of the two examples above, highlighting that it is primarily needed for the theoretical analysis.

We also compared the algorithms in terms of computational load. As expected, the methods that rely on solving more complicated synthesis problems (LagLQ, OSLO) are computationally significantly more expensive compared to those that rely on LQR synthesis problems (IR-LQR, TS, CEC+PE): while the number of controller updates are almost identical across them, the time it takes to carry out a single controller update is an order of magnitude slower for the former class of methods compared to the latter. Note that IR-LQR remained computationally efficient owing to the fact that the fallback option was never used.

In summary, as highlighted by the numerical example as well as Table~\ref{tab:methods_comparison}, IR-LQR successfully utilizes directed exploration to achieve state-of-the-art performance without compromising computational effort compared to undirected approaches.
\section{Conclusion} \label{sec:conclusion}

We proposed IR-LQR, an algorithm based on the optimism in the face of uncertainty principle to solve the online LQR problem. Unlike existing optimism-based methods, our method seeks to retain the structure of the original LQR synthesis problem, resulting in an intuitively pleasing, conceptually simple, computationally cheap, and typically efficient algorithm. We achieved this by leveraging the idea of intrinsic rewards from the field of reinforcement learning. Additionally, it does not require persistently exciting noise injection to learn the optimal controller. We proved that IR-LQR matches the optimal worst-case regret rate of $\sqrt{T}$, validated the theoretical findings via numerical experiments, and compared IR-LQR to several state-of-the-art online LQR algorithms.


\begingroup
\small
\bibliographystyle{IEEEtranN}
\bibliography{refs}
\endgroup

\section*{Appendix}

For the proofs, we introduce the following notation:
\begin{align}
    \bar{J}(K,\Theta,\calV)&:=\max \{\tilde{J}(K,\Theta, \calV),\tilde{J}(K, \Theta, c(\lambda)I)\} \nonumber\\
    &=J(K,\Theta) - \calB(K,\Theta,\calV), \nonumber\\
    \calB(K,\Theta,\calV) &:= \min \{ \trace(\calV\tilde{\Sigma}_\Theta^K),c(\lambda)\trace(\tilde{\Sigma}_\Theta^K)\},\label{eq:calB} \\
    \tilde{\Sigma}^K_\Theta := \bmat{I \\ K} &\Sigma^K_\Theta \bmat{I \\ K}^\top,\ \Sigma^K_\Theta := \dlyap(\Theta \bmat{I \\ K}, \sigma_\mathsf{w}^2I).\nonumber
\end{align}
Additionally, we remark that the LQR costs can be equivalently written as
\begin{align*}
    J(K, \Theta) &= \trace(\mathrm{diag}(Q,R)\tilde{\Sigma}^{K}_\Theta) \\
    &=\trace((Q + K^\top RK)\Sigma^K_\Theta),\\
    \tilde{J}(K, \Theta, \calV) &= \trace((\mathrm{diag}(Q,R) - \calV)\tilde{\Sigma}^{K}_\Theta).
\end{align*}

\subsection{SDP reformulation for the fallback controller} \label{sec:app:SDP}
Consider the SDP
\begin{subequations} \label{eq:SDP}
    \begin{align}
    \min_{\Sigma \succeq 0,s} \quad\quad &s \\
    \text{s.t.} \quad \Sxx &= \hat{\Theta}_t\Sigma \hat{\Theta}_t^\top + \sigma_\mathsf{w}^2I \label{eq:SDP:dynamics}\\
    s &\geq \trace((\mathrm{diag}(Q,R)-\calV_t)\Sigma) \label{eq:SDP:tr1} \\
    s &\geq \trace((\mathrm{diag}(Q,R)-c(\lambda)I)\Sigma) \label{eq:SDP:tr2}
    \end{align}
\end{subequations}
with
\begin{align*}
    \Sigma = \begin{bmatrix}
        \Sxx & \Sxu \\ \Sxu^\top & \Suu
    \end{bmatrix} \in \mathbb{R}^{(\dx+\du)\times(\dx+\du)}.
\end{align*}
Since any solution of~\eqref{eq:SDP} with $\rank(\Sigma) = \dx$ corresponds to a deterministic linear state feedback matrix, taking such a solution and defining
\begin{align*}
    K_t = \Sxu^\top \Sxx^{-1}
\end{align*}
results in a controller that solves the fallback synthesis problem $K_t = \argmin_K \bar{J}(K,\hat{\Theta}_t,\calV_t)$.

Next, we prove that an optimal solution of~\eqref{eq:SDP} with $\rank(\Sigma) = \dx$ always exists. Take an optimal $\Sigma$ and define $r:=\rank(\Sigma)$. From $\Sxx \succeq \sigma_\mathsf{w}^2I$, it holds that $r \geq \dx$. Suppose $r > \dx$. The goal is to transform such a solution to another solution with rank $\dx$. Factor $\Sigma = UU^\top$ with full column rank matrix $U \in \mathbb{R}^{(\dx+\du)\times r}$ and define the perturbation
\begin{align*}
    \Sigma(\epsilon) := U(I+\epsilon H)U^\top,
\end{align*}
where $H = H^\top \in \mathbb{R}^{r \times r}$. Consider the space $\calN$ of symmetric perturbations $H$ under which the dynamics constraint~\eqref{eq:SDP:dynamics} is satisfied. Then
\begin{align*}
    \dim(\calN) \geq \frac{r(r+1)}{2}-\frac{\dx(\dx+1)}{2} \geq \dx+1 \geq 2,
\end{align*}
where $\dx(\dx+1)/2$ is the maximum number of independent scalar equality constraints imposed by~\eqref{eq:SDP:dynamics}.

Define
\begin{align*}
    \phi_1(H) &:= \trace(U^\top (\mathrm{diag}(Q,R)-\calV_t)U H ), \\
    \phi_2(H) &:= \trace(U^\top (\mathrm{diag}(Q,R)-c(\lambda)I)U H ).
\end{align*}
Since $\dim(\calN) \geq 2$, $\exists H \in \calN, H\neq0$ such that $\phi_1(H) \leq 0$ and $\phi_2(H) \leq 0$. Therefore, for such a choice of $H$, $\Sigma(\epsilon)$ satisfies constraints~\eqref{eq:SDP:tr1}-\eqref{eq:SDP:tr2} $\forall \epsilon >0$.

Due to~\eqref{eq:lambdamin} and the fact that $U$ is full column rank, $U^\top (\mathrm{diag}(Q,R)-c(\lambda)I)U \succ 0$. This means that if $H \succeq 0$, then $\phi_2(H) > 0$, contradicting the fact that $H$ was chosen such that $\phi_2(H) \leq 0$. That is, the chosen $H$ must have at least one negative eigenvalue: $\lambda_{\min}(H) < 0$. Select $\varepsilon := -1/\lambda_{\min}(H) > 0$, for which $I+\varepsilon H$ is singular, and thus $\rank(\Sigma(\varepsilon)) < \rank(\Sigma)$. Furthermore, as $H$ was chosen such that $\Sigma(\varepsilon)$ satisfies every constraint in~\eqref{eq:SDP}, it is also an optimal solution of~\eqref{eq:SDP}, i.e., we successfully constructed a new solution with strictly smaller rank. This procedure can be repeated until we end up with an optimal solution that satisfies $\rank(\Sigma) = \dx$.

\subsection{Proof of Lemma~\ref{lem:every_theta_stable}} \label{sec:app:normlemma}
\begin{proof}
Let $M_0 := \hat A_0 + \hat B_0 K_{\mathrm{stab}}$. Since $M_0$ is Schur,
there exists $P \succ 0$ satisfying
\begin{align*}
    M_0^\top P M_0 - P = -I.
\end{align*}
For $M=M_0+\Delta$, it follows that
\begin{align*}
\begin{aligned}
M^\top P M-P
&=
-I+M_0^\top P\Delta+\Delta^\top P M_0+\Delta^\top P\Delta\\
&\preceq
\left(
-1
+2\|P\|_2\|M_0\|_2\|\Delta\|_2
+\|P\|_2\|\Delta\|_2^2
\right)I.
\end{aligned}
\end{align*}
Hence, there exists $\epsilon>0$ such that
$\|\Delta\|_2\leq\epsilon$ implies
\begin{align*}
    M^\top P M
    \preceq P-\frac{1}{2}I
    \preceq (1-\eta)P,
    \qquad
    \eta:=\frac{1}{2\|P\|_2}\in(0,1).
\end{align*}

For any $\Theta=[A\;B]\in\mathcal C_0$ and $K\in\mathcal K$,
\begin{align*}
&\|A+BK-M_0\|_2 \\
&\leq
\|A-\hat A_0\|_2
+\|(B-\hat B_0)K\|_2
+\|\hat B_0(K-K_{\mathrm{stab}})\|_2\\
&\leq
\frac{1+\sfnorm{K_{\mathrm{stab}}}+r_K}{\lambda}
+\|\hat B_0\|_2r_K .
\end{align*}
Choose $r_K>0$ sufficiently small and then $\lambda_\star<\infty$
sufficiently large such that the right-hand side is at most $\epsilon$
for all $\lambda\geq\lambda_\star$. Thus, for every
$\Theta=[A\;B]\in\mathcal C_0$ and $K\in\mathcal K$,
\begin{align*}
    (A+BK)^\top P(A+BK)\preceq(1-\eta)P.
\end{align*}
Consequently, for all $k\geq0$,
\begin{align*}
    \|(A+BK)^k\|_2
    \leq \sqrt{\kappa(P)}(1-\eta)^{k/2},
\end{align*}
and therefore
\begin{align*}
    \sfnorm{(A+BK)^k}
    \leq \sqrt{\dx\kappa(P)}(1-\eta)^{k/2}.
\end{align*}
The claim follows with
$C=\sqrt{\dx\kappa(P)}$ and $\rho=\sqrt{1-\eta}$.
\end{proof}

\subsection{Proof of Lemma~\ref{lem:K_in_calK}} \label{sec:app:calKlemma}
\emph{1) $K_\star$ and the primary controller~\eqref{eq:K_primary}:}
Since $\hat{\Theta}_{\tau_m}\in\calC_0$, we have
\begin{align*}
    \bfnorm{\hat{\Theta}_{\tau_m}-\hat\Theta_0} \leq 1/\lambda.
\end{align*}
Moreover, whenever~\eqref{eq:K_primary} is used, $\snorm{\calV_{\tau_m}} \leq c(\lambda)$.
Because $c(\lambda)=\calO(1/\lambda)$ and the stabilizing generalized LQR controller depends continuously on the problem data in a neighborhood of $(\hat{A}_0,\hat{B}_0,Q,R,0)$, there exist constants $L_\Theta,L_V$ such that
\begin{align*}
    \fnorm{K_{\tau_m}-K_\mathrm{stab}} &\leq L_\Theta \bfnorm{\hat{\Theta}_{\tau_m}-\hat{\Theta}_0} + L_V \norm{\calV_{\tau_m}}_2 \\
    &\leq L_\Theta/\lambda + L_V c(\lambda),
\end{align*}
see also~\cite[Lemma~3]{bartos2025stability}. Choosing $\lambda$ sufficiently large yields
$\fnorm{K_{\tau_m}-K_\mathrm{stab}} \leq r_K$, hence $K_{\tau_m}\in\calK$. $K_\star\in\calK$ can be proven analogously.

2) \emph{Fallback controller~\eqref{eq: optimistic synthesis}:}
Let
\[
K_\circ := \argmin_K \tilde J(K,\hat\Theta_{\tau_m},
c(\lambda)I).
\]
By the same continuity argument as above,
\begin{align*}
    \|K_\circ-K_{\rm stab}\|_{\rm F}
    &\leq L_\Theta
    \|\hat\Theta_{\tau_m}-\hat\Theta_0\|_{\rm F}
    +L_c c(\lambda)
    =O(1/\lambda).
\end{align*}
Thus, for sufficiently large $\lambda$, $K_\circ\in\mathcal K$. In
particular, Lemma III.2 implies
\begin{align*}
    \tr(\tilde\Sigma_{\hat\Theta_{\tau_m}}^{K_\circ})
    \leq
    \frac{(1+\bar K^2)\sigma_\mathsf{w}^2 C^2}{1-\rho^2}.
\end{align*}
Moreover, optimality of the fallback controller and
$\mathcal V_{\tau_m}\succeq0$ yield
\begin{align*}
    \tilde J(K_{\tau_m},\hat\Theta_{\tau_m},c(\lambda)I)
    &\leq
    \bar J(K_{\tau_m},\hat\Theta_{\tau_m},\mathcal V_{\tau_m})\\
    &\leq
    \bar J(K_\circ,\hat\Theta_{\tau_m},\mathcal V_{\tau_m})\\
    &\leq
    \tilde J(K_\circ,\hat\Theta_{\tau_m},c(\lambda)I)
    +c(\lambda)
    \tr(\tilde\Sigma_{\hat\Theta_{\tau_m}}^{K_\circ}).
\end{align*}
Since $c(\lambda)<\lambda_{\min}(\diag(Q,R))$, the modified
cost with exploration bonus $c(\lambda)I$ is positive definite.
Consequently, the above inequality also implies that
$K_{\tau_m}$ stabilizes $\hat\Theta_{\tau_m}$. Applying the
LQR performance-difference result
\cite[Lemma~10]{fazel2018global} to this modified LQR problem
therefore gives
\begin{align*}
    &\tilde J(K_{\tau_m},\hat\Theta_{\tau_m},c(\lambda)I)
    -\tilde J(K_\circ,\hat\Theta_{\tau_m},c(\lambda)I)\\
    &\qquad\geq
    \sigma_\mathsf{w}^2\bigl(\lambda_{\min}(R)-c(\lambda)\bigr)
    \|K_{\tau_m}-K_\circ\|_{\rm F}^2.
\end{align*}
Combining the previous inequalities, and choosing $\lambda$
sufficiently large such that
$c(\lambda)\leq\lambda_{\min}(R)/2$, yields
\begin{align*}
    \|K_{\tau_m}-K_\circ\|_{\rm F}^2
    &\leq
    \frac{c(\lambda)
    \tr(\tilde\Sigma_{\hat\Theta_{\tau_m}}^{K_\circ})}
    {\sigma_\mathsf{w}^2(\lambda_{\min}(R)-c(\lambda))}
    =O(c(\lambda))
    =O(1/\lambda).
\end{align*}
Hence $\|K_{\tau_m}-K_{\rm stab}\|_{\rm F}
    =O(1/\sqrt{\lambda})$, and
a sufficiently large $\lambda$  therefore yields
$\|K_{\tau_m}-K_{\rm stab}\|_{\rm F}\leq r_K$, i.e.,
$K_{\tau_m}\in\mathcal K$.
\hfill $\blacksquare$

\subsection{Proof of Theorem~\ref{thm:optimstic_is_lower_bound} and further auxiliary results} \label{sec:app:optim}
The proof relies on the following result on the perturbation of Lyapunov equations:
\begin{lemma} \label{lem:Lyapunov_perturbation}
    Let $\Sigma = \dlyap(A,Q)$ and $\Sigma' = \dlyap(A',Q)$ with $Q \succ 0$ and $A,A'$ such that $\exists C \geq 1$ and $\exists\rho \in (0,1)$ satisfying $\| A^k\| \leq C\rho^k$ and $\| A'^k\| \leq C\rho^k$ for all $k\geq0$. Let $\|.\|$ denote any given matrix norm. Then
    \begin{align*}
        \| &\Sigma - \Sigma' \| \leq \frac{C^2}{1-\rho^2} \\
        &\times (2 \| A \Sigma' (A-A')^\top\| +  \| (A-A') \Sigma' (A-A')^\top\|).
    \end{align*}
\end{lemma}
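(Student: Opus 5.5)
We prove Lemma~\ref{lem:Lyapunov_perturbation} by reducing the difference $\Sigma-\Sigma'$ to the solution of a discrete Lyapunov equation driven by a perturbation term, and then bounding that solution through its series representation.

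\emph{Step 1: equation for the difference.} Set $\Delta := A - A'$. The two solutions satisfy
\begin{align*}
\Sigma = A\Sigma A^\top + Q, \qquad \Sigma' = A'\Sigma' A'^\top + Q.
\end{align*}
Subtracting gives $\Sigma - \Sigma' = A\Sigma A^\top - A'\Sigma' A'^\top$. Write $A' = A - \Delta$ and expand the second term:
\begin{align*}
A'\Sigma' A'^\top = A\Sigma' A^\top - A\Sigma'\Delta^\top - \Delta\Sigma' A^\top + \Delta\Sigma'\Delta^\top.
\end{align*}
Substituting, $D := \Sigma - \Sigma'$ satisfies $D = A D A^\top + E$, where
\begin{align*}
E := A\Sigma'\Delta^\top + \Delta\Sigma' A^\top - \Delta\Sigma'\Delta^\top .
\end{align*}

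\emph{Step 2: series representation.} Since $A$ is stable ($\|A^k\|\le C\rho^k$), this equation has the unique solution
\begin{align*}
D = \sum_{k\ge 0} A^k E (A^\top)^k.
\end{align*}
Taking norms term by term gives
\begin{align*}
\|D\| \le \sum_k \|A^k\|\,\|E\|\,\|(A^\top)^k\| \le \frac{C^2}{1-\rho^2}\|E\|.
\end{align*}
The triangle inequality then yields $\|E\| \le 2\|A\Sigma'\Delta^\top\| + \|\Delta\Sigma'\Delta^\top\|$, using $\|\Delta\Sigma' A^\top\| = \|(A\Sigma'\Delta^\top)^\top\|$. This is exactly the claimed bound.

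\emph{Step 3: justifying the norm manipulations.} This is the step that needs care. I will assume that ``matrix norm'' means a submultiplicative norm, invariant under transposition, with the bound $\|(A^\top)^k\| \le C\rho^k$ inherited from the one on $A^k$. This holds for the spectral and Frobenius norms, which are the ones used in the paper.

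For a general norm without these properties, an extra equivalence constant would appear. I would state these properties explicitly as the intended meaning of ``matrix norm'' in the lemma. Only the hypothesis on $A$ is actually used. The hypothesis on $A'$ is not needed beyond guaranteeing that $\Sigma'$ exists.
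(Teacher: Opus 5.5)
Your proposal is correct and follows essentially the same route as the paper. Both write the difference of the two Gramians as the solution of a Lyapunov equation in $A$ driven by the cross and quadratic perturbation terms (your $E$ is exactly $-\tilde{Q}$ in the paper's proof), bound the series by $C^2/(1-\rho^2)$, and finish with the triangle inequality. Your explicit requirement that the norm be submultiplicative and transpose-invariant is precisely what the paper's proof uses implicitly, despite the lemma's phrase ``any given matrix norm''.
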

\begin{proof}
    \begin{align*}
        \Sigma' - \Sigma &= (A + A' - A) \Sigma' (A + A' - A)^\top - A \Sigma A^\top \\
        &= \dlyap(A, \tilde{Q}),
    \end{align*}
    where
    \begin{align*}
        \tilde{Q} = (A'-A)\Sigma' A^\top &+ A \Sigma' (A' - A)^\top \\
        &+ (A'-A)\Sigma' (A'-A)^\top,
    \end{align*}
    and thus
    \begin{align*}
        \|\Sigma - \Sigma'\| &= \| \sum_{k=0}^\infty A^k \tilde{Q} (A^\top)^k \| \\
        &\leq \| \tilde{Q} \| \sum_{k=0}^\infty \| A^k\|^2 \leq \frac{C^2}{1-\rho^2} \| \tilde{Q}\|,
    \end{align*}
    where $\| \tilde{Q} \| \leq 2 \| A \Sigma' (A-A')^\top\| +  \| (A-A') \Sigma' (A-A')^\top\|$.
\end{proof}

Lemma~\ref{lem:Lyapunov_perturbation}, enables us to bound the difference in the cost of a given controller on the true system and the estimated system by the exploration bonus, which is formally stated in the following lemma:
\begin{lemma} \label{lem:exploration_bounds_cost_diff}
    Suppose that~\eqref{eq:id_success_event} holds for some time step $t$.
    Then, for any controller $K \in \calK$, it holds that
    \begin{align} \label{eq:proof:optimism_theta_hat}
        | J(K, \hat{\Theta}_t) - J(K, \Theta_\star) | \leq \calB(K,\hat{\Theta}_t,\calV_t).
    \end{align}
\end{lemma}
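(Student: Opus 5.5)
The plan is to write both costs in trace form, $J(K,\Theta)=\trace((Q+K^\top RK)\Sigma^K_\Theta)$, and reduce the cost gap to a gap between closed-loop covariances. This gives
\begin{align*}
|J(K,\hat\Theta_t)-J(K,\Theta_\star)| \le \fnorm{Q+K^\top RK}\,\bfnorm{\Sigma^K_{\hat\Theta_t}-\Sigma^K_{\Theta_\star}},
\end{align*}
and $\fnorm{Q+K^\top RK}\le \fnorm{Q}+\bar K^2\fnorm{R}$ because $K\in\calK$. The covariance gap is then handled by Lemma~\ref{lem:Lyapunov_perturbation} in the Frobenius norm. We take $A=\Theta_\star\bmat{I\\K}$, $A'=\hat\Theta_t\bmat{I\\K}$ and $\Sigma'=\Sigma^K_{\hat\Theta_t}$. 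Its hypotheses hold with the constants $C,\rho$ of Lemma~\ref{lem:every_theta_stable}: $\Theta_\star\in\calC_0$ by Assumption~\ref{asm:C0}, $\hat\Theta_t\in\calC_0$ because of the projection in \eqref{eq:pRLS}, and $K\in\calK$. Combining the factor $C^2/(1-\rho^2)$ with $\fnorm{Q}+\bar K^2\fnorm{R}$ yields exactly $L_J$.

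Set $\Delta:=\Theta_\star-\hat\Theta_t$, so that $A-A'=\Delta\bmat{I\\K}$. The point is that both terms in Lemma~\ref{lem:Lyapunov_perturbation} collapse onto $\tilde\Sigma:=\tilde\Sigma^K_{\hat\Theta_t}$:
\begin{align*}
A\Sigma'(A-A')^\top=\Theta_\star\tilde\Sigma\Delta^\top,\qquad (A-A')\Sigma'(A-A')^\top=\Delta\tilde\Sigma\Delta^\top .
\end{align*}
This gives the intermediate bound
\begin{align*}
|J(K,\hat\Theta_t)-J(K,\Theta_\star)|\le L_J\big(2\bfnorm{\Theta_\star\tilde\Sigma\Delta^\top}+\bfnorm{\Delta\tilde\Sigma\Delta^\top}\big).
\end{align*}
It then suffices to bound the right-hand side separately by $\trace(\calV_t\tilde\Sigma)$ and by $c(\lambda)\trace(\tilde\Sigma)$. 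Taking the minimum of the two bounds then gives $\calB(K,\hat\Theta_t,\calV_t)$.

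\emph{Data-dependent bound.} This step is the main one: the terms must be split with $V_t^{\pm1/2}$ so that \eqref{eq:id_success_event} can be used. Write $\Delta\tilde\Sigma^{1/2}=(\Delta V_t^{1/2})(V_t^{-1/2}\tilde\Sigma^{1/2})$. Then
\begin{align*}
\bfnorm{\Delta\tilde\Sigma^{1/2}}\le\norm{\Delta}_{V_t}\,\bfnorm{V_t^{-1/2}\tilde\Sigma^{1/2}}\le\beta_t\sqrt{\trace(V_t^{-1}\tilde\Sigma)}.
\end{align*}
The same splitting gives
\begin{align*}
\bfnorm{\Theta_\star\tilde\Sigma^{1/2}}\le\bar\Theta\snorm{V_t}^{1/2}\sqrt{\trace(V_t^{-1}\tilde\Sigma)}.
\end{align*}
By submultiplicativity, the cross term is at most $\bar\Theta\beta_t\snorm{V_t}^{1/2}\trace(V_t^{-1}\tilde\Sigma)$. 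The quadratic term satisfies $\bfnorm{\Delta\tilde\Sigma\Delta^\top}\le\bfnorm{\Delta\tilde\Sigma^{1/2}}^2\le\beta_t^2\trace(V_t^{-1}\tilde\Sigma)$. Summing, the gap is at most $L_J(\beta_t^2+2\bar\Theta\beta_t\snorm{V_t}^{1/2})\trace(V_t^{-1}\tilde\Sigma)$. By \eqref{eq:gt} and \eqref{eq: variance bonus} this equals $g_t\trace(V_t^{-1}\tilde\Sigma)=\trace(\calV_t\tilde\Sigma)$.

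\emph{Prior-based bound.} Here we use \eqref{eq:lam_bound_on_error}, $\fnorm{\Delta}\le 2/\lambda$, together with $\fnorm{\tilde\Sigma}\le\trace(\tilde\Sigma)$ for $\tilde\Sigma\succeq0$. These give
\begin{align*}
2\bfnorm{\Theta_\star\tilde\Sigma\Delta^\top}\le 4\bar\Theta\trace(\tilde\Sigma)/\lambda,\qquad \bfnorm{\Delta\tilde\Sigma\Delta^\top}\le 4\trace(\tilde\Sigma)/\lambda^2 .
\end{align*}
Multiplying by $L_J$ yields $4L_J(\bar\Theta/\lambda+1/\lambda^2)\trace(\tilde\Sigma)=c(\lambda)\trace(\tilde\Sigma)$, using \eqref{eq:c(lam)}. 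The only care needed in the whole argument is to keep the norms consistent: every norm is Frobenius except the $V_t$-weighted factor, so that the $\snorm{V_t}^{1/2}$ and $\beta_t$ factors land exactly as in $g_t$.
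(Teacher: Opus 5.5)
Your proposal is correct and follows essentially the same route as the paper. You reduce the gap to $\fnorm{Q+K^\top RK}\,\fnorm{\Sigma^K_{\hat\Theta_t}-\Sigma^K_{\Theta_\star}}$, apply Lemma~\ref{lem:Lyapunov_perturbation} expanding around the true closed loop, and bound the cross and quadratic terms once via \eqref{eq:lam_bound_on_error} (giving $c(\lambda)\,\mathrm{tr}(\tilde\Sigma)$) and once via \eqref{eq:id_success_event} (giving $g_t\,\mathrm{tr}(V_t^{-1}\tilde\Sigma)$). The only difference is cosmetic, in the data-dependent cross term: you split $\tilde\Sigma=\tilde\Sigma^{1/2}\tilde\Sigma^{1/2}$ and attach $V_t^{1/2}$ to $\Theta_\star$, whereas the paper first pulls out $\fnorm{\Theta_\star}$ and uses $\mathrm{tr}(\tilde\Sigma V_t^{-1}\tilde\Sigma)\le\|V_t\|_2\,\mathrm{tr}(V_t^{-1}\tilde\Sigma)^2$, which yields the same constant.
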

\begin{proof}
    It immediately follows from Lemma~\ref{lem:Lyapunov_perturbation} that
    \begin{align*}
        &\bfnorm{\Sigma^K_{\Theta_\star} -  \Sigma^K_{\hat{\Theta}_t}} \\ &\leq \frac{C^2}{1-\rho^2}\Big( 2 \fnorm{\Theta_\star}  \underbrace{\bfnorm{\tilde{\Sigma}^K_{\hat{\Theta}_t} (\Theta^\star-\hat{\Theta}_t)^\top}}_{\diamondsuit} \\
        &+ \underbrace{\bfnorm{ (\Theta^\star-\hat{\Theta}_t) \tilde{\Sigma}^K_{\hat{\Theta}_t} (\Theta^\star-\hat{\Theta}_t)^\top}}_{\spadesuit}\Big).
    \end{align*}
    Here $\diamondsuit$ and $\spadesuit$ can be bounded in two different ways. On one hand, using~\eqref{eq:lam_bound_on_error},
    \begin{align*}
        \diamondsuit &= \sqrt{\mathrm{tr}\left(  (\Theta^\star-\hat{\Theta}_t) \tilde{\Sigma}^K_{\hat{\Theta}_t}\tilde{\Sigma}^K_{\hat{\Theta}_t} (\Theta^\star-\hat{\Theta}_t)^\top\right)} \\ &\leq \frac{2}{\lambda}\sqrt{\mathrm{tr}(\tilde{\Sigma}^K_{\hat{\Theta}_t} \tilde{\Sigma}^K_{\hat{\Theta}_t})} \leq \frac{2}{\lambda}\mathrm{tr}(\tilde{\Sigma}^K_{\hat{\Theta}_t}),
    \end{align*}
    and
    \begin{align*}
        \spadesuit &=\sqrt{\mathrm{tr}\left( ( (\Theta^\star-\hat{\Theta}_t) \tilde{\Sigma}^K_{\hat{\Theta}_t} (\Theta^\star-\hat{\Theta}_t)^\top)^2\right)} \\
        &\leq \mathrm{tr}\left(  (\Theta^\star-\hat{\Theta}_t) \tilde{\Sigma}^K_{\hat{\Theta}_t} (\Theta^\star-\hat{\Theta}_t)^\top \right) \leq \frac{4}{\lambda^2}\mathrm{tr}(\tilde{\Sigma}^K_{\hat{\Theta}_t}) 
    \end{align*}
    yielding
    \begin{align}
        |J&(K, \hat{\Theta}_t) -J(K, \Theta_\star) | \label{eq:J_diff_bound_1}\\ &\leq \bfnorm{Q + K^\top R K} \bfnorm{\Sigma^K_{\Theta_\star} -  \Sigma^K_{\hat{\Theta}_t}} 
        \leq c(\lambda)\trace(\tilde{\Sigma}^K_{\hat{\Theta}_t}). \nonumber
    \end{align}
    On the other hand,~\eqref{eq:id_success_event} results in
    \begin{align*}
        \diamondsuit & \leq \beta_t \sqrt{\trace( \tilde{\Sigma}^K_{\hat{\Theta}_t} V_t^{-1} \tilde{\Sigma}^K_{\hat{\Theta}_t})} \\
        &\leq \beta_t \bnorm{V^{1/2}}_2\sqrt{\trace(V_t^{-1} \tilde{\Sigma}^K_{\hat{\Theta}_t} V_t^{-1} \tilde{\Sigma}^K_{\hat{\Theta}_t})} \\
        &\leq \beta_t \bnorm{ V^{1/2}}_2 \trace(V_t^{-1} \tilde{\Sigma}^K_{\hat{\Theta}_t}),
    \end{align*}
    and
    \begin{align*}
        \spadesuit \leq \beta_t^2 \trace(V_t^{-1} \tilde{\Sigma}^K_{\hat{\Theta}_t}),
    \end{align*}
    i.e.,
     \begin{align}
        |J&(K, \hat{\Theta}_t) -J(K, \Theta_\star) | \label{eq:J_diff_bound_2}\\ &\leq \bfnorm{Q + K^\top R K} \bfnorm{\Sigma^K_{\Theta_\star} -  \Sigma^K_{\hat{\Theta}_t}} 
        \leq g_t\trace(V_t^{-1}\tilde{\Sigma}^K_{\hat{\Theta}_t}). \nonumber
    \end{align}
    Combining~\eqref{eq:calB}, \eqref{eq:J_diff_bound_1}, and~\eqref{eq:J_diff_bound_2} proves~\eqref{eq:proof:optimism_theta_hat}.
\end{proof}
Note that this result immediately implies Theorem~\ref{thm:optimstic_is_lower_bound}.

\subsection{Success events} \label{sec:app:events}

Before proceeding with the proof of Theorem~\ref{thm:IR-LQR_regret}, we introduce the notation $\tau_m$ to denote the time step of the start of epoch $m \in \{0,\dots,M-1\}$ with $\tau_0 = 0$, where the word `epoch' refers to the set of time indices for which the controller $K_t$ has been kept fixed because the condition in line~\ref{algstep:info_gain_criterion} of Algorithm~\ref{alg:IR-LQR} has not been satisfied. For simplicity, we assume that $\tau_M = T$, i.e., the experiment is run until the end of epoch $M$, but the arguments presented here can also be generalized to arbitrary $T$. We also define the following quantities:
\begin{align*}
    \bar{w}_t :=\max_{0 \leq k \leq t} \norm{w_k},\, 
    \bar{x}_t :=\max_{0 \leq k \leq t} \norm{x_k},\, 
    \bar{\Sigma} := \max_{K \in \calK} \norm{\Sigma^K_{\Theta_\star}}_2.
\end{align*}

Consider the success event
\begin{align*}
    \calE := \calE^\mathrm{ID} \cap \calE^\mathrm{noise} \cap \calE^\mathrm{conc},
\end{align*}
where
\begin{align*}
    \calE^\mathrm{ID} &:= \curly{\bnorm{\hat{\Theta}_t - \Theta_\star}_{V_t} \leq \beta_t(\delta^\mathrm{ID}), \forall t\geq0}, \\
    \calE^\mathrm{noise}&:= \curly{\bar{w}_{t} \leq 4\sigma_\mathsf{w}\sqrt{\dx\log(\pi^2 (t+1)^3/(6 \delta^{\mathrm{noise}}))}, \, \forall t \geq 0}, \\
    \calE^\mathrm{conc}&:=\bigcap_{m=0}^{M-1}\bigcap_{N=1}^\infty \calE_{m,N}^\mathrm{conc},
\end{align*}
and
\begin{align}
    \calE_{m,N}^\mathrm{conc} := \bigg\{\big\|N \Sigma^{K_{\tau_m}}_{\Theta_\star} &- \sum_{t=\tau_m}^{\tau_m+N-1}x_t^{(m)}x_t^{(m)\top}\big\|_2 \label{eq:conc_mN_def} \\
    &\leq \alpha(\delta_{m,N}, \tau_m,N)\sqrt{N} \bigg\}. \nonumber
\end{align}
Here, 
\begin{align}
    \alpha(\delta, t, N) &:= C^2\left(\frac{\bar{x}_t^2 + \bar{\Sigma}}{1-\rho^2} + \frac{2 \bar{w}_{t+N} \bar{x}_t}{(1-\rho)^2}\right) + (\tilde{\alpha}(\delta)^2 + \tilde{\alpha}(\delta))\bar{\Sigma}, \nonumber \\
    \tilde{\alpha}(\delta) &:= \frac{C}{1-\rho}\left(\sqrt{576\dx\log(18/\delta)} \right), \label{eq: concentration coefficient} 
\end{align}
and $x_t^{(m)}$ denotes the state resulting from applying $K_{\tau_m}$ initialized at $x_{\tau_m}^{(m)}=x_{\tau_m}$.

\begin{lemma} \label{lem:conc_m}
    Let $\delta_{m,N} \in (0,1)$. Then $\mathbb{P}[\calE^\mathrm{conc}_{m,N}] \geq 1-\delta_{m,N}$.
\end{lemma}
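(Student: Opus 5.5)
We condition on the history up to $\tau_m$, so that $x_{\tau_m}$, $K_{\tau_m}$, $\bar{x}_{\tau_m}$ and the closed-loop matrix $M:=A_\star+B_\star K_{\tau_m}$ are fixed, and we work with the fictitious trajectory $x_t^{(m)}$. By Lemmas~\ref{lem:every_theta_stable}--\ref{lem:K_in_calK}, $K_{\tau_m}\in\calK$, so $\|M^k\|_2\le C\rho^k$. Unrolling for $k=t-\tau_m\ge 0$ gives
\begin{align*}
x_t^{(m)} = M^{k}x_{\tau_m} + z_t,\qquad z_t:=\sum_{j=0}^{k-1}M^{k-1-j}w_{\tau_m+j}.
\end{align*}
Hence
\begin{align*}
\sum_t x_t^{(m)}x_t^{(m)\top} = \sum_t M^k x_{\tau_m}x_{\tau_m}^\top M^{k\top} + \sum_t\bigl(M^kx_{\tau_m}z_t^\top + z_tx_{\tau_m}^\top M^{k\top}\bigr) + \sum_t z_tz_t^\top .
\end{align*}
We compare this sum with $N\Sigma := N\Sigma^{K_{\tau_m}}_{\Theta_\star}$ term by term, each term matching one piece of $\alpha$.

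\textbf{Step 1 (deterministic terms).} The transient satisfies
\begin{align*}
\Bigl\|\sum_t M^kx_{\tau_m}x_{\tau_m}^\top M^{k\top}\Bigr\|_2\le \frac{C^2\bar{x}_{\tau_m}^2}{1-\rho^2}.
\end{align*}
Since $\|z_t\|\le C\bar{w}_{\tau_m+N}/(1-\rho)$ and $\sum_k\|M^k\|\le C/(1-\rho)$, the cross terms are bounded by $2C^2\bar{w}_{\tau_m+N}\bar{x}_{\tau_m}/(1-\rho)^2$. This is a crude pathwise bound, but it is what the $\alpha$ formula allows. Next, $\E z_tz_t^\top = \Sigma - \sum_{j\ge k}M^j\sigma_\mathsf{w}^2 M^{j\top}$, and the deficit summed over $k$ is bounded by $C^2\bar{\Sigma}/(1-\rho^2)$, using $\sigma_\mathsf{w}^2 I\preceq\Sigma$. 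These three pieces give the $C^2(\cdot)$ part of $\alpha$, which is $O(1)\le O(\sqrt N)$ since $N\ge1$.

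\textbf{Step 2 (stochastic term).} It remains to show that, with probability $1-\delta_{m,N}$,
\begin{align*}
\Bigl\|\sum_t (z_tz_t^\top - \E z_tz_t^\top)\Bigr\|_2 \le (\tilde\alpha^2+\tilde\alpha)\bar\Sigma\sqrt N .
\end{align*}
I would stack the noise $\mathbf{w}=(w_{\tau_m},\dots,w_{\tau_m+N-1})$, so that $z_t = L_t\mathbf{w}$ with a block-Toeplitz operator $\mathbf{L}$ whose norm is $\le C/(1-\rho)$. Then, for a fixed unit $v$,
\begin{align*}
\sum_t (v^\top z_t)^2=\mathbf{w}^\top \mathbf{A}_v\mathbf{w},\qquad \|\mathbf{A}_v\|_2\le C^2/(1-\rho)^2,\qquad \|\mathbf{A}_v\|_F\le \sqrt N\,C^2/(1-\rho)^2 .
\end{align*}
Hanson--Wright then gives deviation $\lesssim \sigma_\mathsf{w}^2\bigl(\|\mathbf{A}_v\|_F\sqrt{\log(1/\delta)}+\|\mathbf{A}_v\|\log(1/\delta)\bigr)$. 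Finally, a $1/4$-net over the sphere (size $9^{\dx}$) together with the standard symmetric-matrix net argument, which costs a factor of 2, gives the bound in $\|\cdot\|_2$. This explains the constants $\log(18/\delta)$ and $576\dx$. The $\sigma_\mathsf{w}^2$ factor is absorbed into $\bar\Sigma$ because $\sigma_\mathsf{w}^2\le\lambda_{\min}(\Sigma)\le\bar\Sigma$, and the linear-in-$\log$ term is bounded by $\tilde\alpha^2\bar\Sigma\sqrt N$ since $\sqrt N\ge1$.

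\textbf{Main obstacle.} The hardest part is getting the constants to land exactly on $\tilde\alpha(\delta)$, in particular tracking the sub-exponential constant in Hanson--Wright. If this fails, I would instead use the Gaussian chi-square tail (Laurent--Massart) on the eigen-decomposition of $\mathbf{A}_v$, which gives explicit constants, and then absorb any slack into the generous constant $576$.

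Note also that $\bar{x}_{\tau_m}$ is $\mathcal F_{\tau_m}$-measurable, but $\bar w_{\tau_m+N}$ is random, not conditioned on. Since the cross-term bound of Step 1 holds pathwise, it introduces no probabilistic issue. The resulting probability bound holds conditionally on $\mathcal F_{\tau_m}$ and hence unconditionally.
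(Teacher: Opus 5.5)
Your proposal is correct and follows essentially the paper's proof. It uses the same split into a deterministic transient-plus-stationary-deficit part (the paper's $\heartsuit$, bounded by $\frac{C^2}{1-\rho^2}(\bar{x}_{\tau_m}^2+\bar\Sigma)$), pathwise cross terms (the paper's $\blacklozenge$), and a concentration bound for the noise-driven Gram matrix. The only real difference is in that last step: the paper invokes Lemma~B.2 of \cite{ziemann2023tutorial} on the whitened matrix $\calM\calW^\top\calW\calM$ and rescales by $\|\calM^{-2}\|_2\le (N-1)\bar\Sigma$, which is where $\max(\varepsilon,\varepsilon^2)$ and hence $\tilde\alpha^2+\tilde\alpha$ come from, whereas you re-derive the concentration directly via Hanson--Wright and a $1/4$-net; your cruder bound $\|\mathbf{A}_v\|_F\le\sqrt{N}\|\mathbf{A}_v\|_2$ is looser, but with explicit Gaussian (Laurent--Massart) constants it is comfortably absorbed by the $\tilde\alpha^2\bar\Sigma\sqrt{N}$ term, so the stated bound holds.
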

\begin{proof}
    For $N=1$, the desired inequality holds deterministically: $\snorm{\Sigma_{\Theta_\star}^{K_{\tau_m}}-x_{\tau_m}x_{\tau_m}^\top}_2 \leq \bar{\Sigma} + \bar{x}_{\tau_m}^2 \leq \alpha(\delta_{m,1},\tau_m,1)$. Assume henceforth $N\geq2$. For notational simplicity, we introduce the shifted time indexing $\tilde{x}_i := x_{\tau_m + i}^{(m)}$ and let
    $\calA := (A_\star  + B_\star K_{\tau_m})$
    . We analyze the concentration inequality by splitting it according to
    \begin{align*}
        \bnorm{N \Sigma^{K_{\tau_m}}_{\Theta_\star} - \sum_{i=0}^{N-1}\tilde{x}_i\tilde{x}_i^\top}_2 \leq \underbrace{\bnorm{N \Sigma^{K_{\tau_m}}_{\Theta_\star} - \sum_{i=0}^{N-1}\mathbb{E}[\tilde{x}_i\tilde{x}_i^\top|\tilde{x}_0]}_2}_{\heartsuit} \\ + \underbrace{\bnorm{\sum_{i=0}^{N-1}\mathbb{E}[\tilde{x}_i\tilde{x}_i^\top|\tilde{x}_0] - \sum_{i=0}^{N-1}\tilde{x}_i\tilde{x}_i^\top}_2}_{\clubsuit}.
    \end{align*}
    \begin{enumerate}
        \item First, we bound $\heartsuit$. From
        \begin{align*}
            \tilde{x}_i =\calA^i\tilde{x}_0 + \sum_{k=0}^{i-1}\calA^i w_i,
        \end{align*}
        we get that
        \begin{align*}
            &\sum_{i=0}^{N-1}\mathbb{E}[\tilde{x}_i\tilde{x}_i^\top|\tilde{x}_0] - N \Sigma^{K_{\tau_m}}_{\Theta_\star} \\
            &= \sum_{i=0}^{N-1}\calA^i\tilde{x}_0\tilde{x}_0^\top (\calA^i)^\top
            + \sum_{i=0}^{N-1} (\sum_{k=0}^{i-1}\sigma_\mathsf{w}^2\calA^k(\calA^k)^\top - \Sigma^{K_{\tau_m}}_{\Theta_\star}) \\
            &=\sum_{i=0}^{N-1}\calA^i\tilde{x}_0\tilde{x}_0^\top (\calA^i)^\top
            - \sum_{i=0}^{N-1} \sum_{k=i}^{\infty}\sigma_\mathsf{w}^2\calA^k(\calA^k)^\top \\
            &=\sum_{i=0}^{N-1}\calA^i\tilde{x}_0\tilde{x}_0^\top (\calA^i)^\top
            - \sum_{i=0}^{N-1} \calA^i\sum_{k=0}^{\infty}\sigma_\mathsf{w}^2\calA^k(\calA^k)^\top (\calA^i)^\top \\
            &=\sum_{i=0}^{N-1}\calA^i\tilde{x}_0\tilde{x}_0^\top (\calA^i)^\top
            - \sum_{i=0}^{N-1} \calA^i \Sigma^{K_{\tau_m}}_{\Theta_\star} (\calA^i)^\top,
        \end{align*}
        resulting in
        \begin{align}
            \heartsuit &\leq (\norm{\tilde{x}_0}^2 + \bnorm{\Sigma^{K_{\tau_m}}_{\Theta_\star}}_2)\sum_{i=0}^\infty \norm{\calA^i}_2^2 \nonumber\\
            &\leq \frac{C^2}{1-\rho^2}(\bar{x}_{\tau_m}^2 + \bar{\Sigma}) \label{eq:proof:heartbound}.
        \end{align}
        \item As for $\clubsuit$, define 
        \begin{align*}
            \calM &:= (\sum_{i=1}^{N-1}\sum_{k=0}^{i-1}\sigma_\mathsf{w}^2\calA^k (\calA^k)^\top)^{-1/2}, \\
            \calW^\top \calW &:= \sum_{i=1}^{N-1}\left(\sum_{k=0}^{i-1}\calA^kw_k\right) \left(\sum_{k=0}^{i-1}\calA^kw_k\right)^\top, \\
            \calL &:= \begin{bmatrix}
                I & 0 & \dots & 0 \\
                \calA & I & \dots & 0 \\
                \vdots & & \ddots & \\
                \calA^{N-1} & \calA^{N-2} & \dots & I
                \end{bmatrix}
                \in \mathbb{R}^{N \dx \times N \dx}.
        \end{align*}
        Then,
        \begin{align*}
            &\clubsuit \leq \norm{\calW^\top\calW - \calM^{-2}}_2 \\
            &+ \underbrace{\norm{\sum_{i=1}^{N-1}\left(\sum_{k=0}^{i-1}\calA^kw_k\right)\tilde{x}_0^\top (\calA^i)^\top + \calA^i\tilde{x}_0\left(\sum_{k=0}^{i-1}\calA^kw_k\right)^\top}_2}_{\blacklozenge},
        \end{align*}
        where
        \begin{align} 
            \blacklozenge &\leq 2 \bar{w}_{\tau_m+N}\norm{\tilde{x}_0}\sum_{i=1}^{N-1} \norm{\calA^i} \sum_{k=0}^{i-1} \norm{\calA^k} \nonumber\\
            &\leq \frac{C^2}{(1-\rho)^2} 2 \bar{w}_{\tau_m+N} \bar{x}_{\tau_m}.\label{eq:proof:conc_bound2}
        \end{align}
        
        In order to bound $\norm{\calW^\top\calW - \calM^{-2}}_2$, we define
        \begin{align*}
            \varepsilon(\delta) := \frac{1}{\sqrt{N-1}}\tilde{\alpha}(\delta),
        \end{align*}
        in which case
        \begin{align*}
            \exp\left(\dx\log(18)-\frac{\varepsilon^2(\delta)}{576\sigma_\mathsf{w}^2\norm{\calM}_2^2\norm{\calL}_2^2}\right) \leq \delta,
        \end{align*}
        due to $\norm{\calL}_2^2 \leq C^2/(1-\rho)^2$ and 
         \begin{align*}
            \norm{\calM}^2_2 &= 1/\lambda_{\min}(\sum_{i=1}^{N-1}\sum_{k=0}^{i-1}\sigma_\mathsf{w}^2\calA^k (\calA^k)^\top) \\
            &\leq 1/((N-1)\sigma_\mathsf{w}^2).
        \end{align*}
        Applying~\cite[Lemma~B.2]{ziemann2023tutorial} results in
        \begin{align} \label{eq:LemmaB1}
            \mathbb{P}[\norm{\calM\calW^\top\calW\calM - I}_2 > \max(\varepsilon(\delta),\varepsilon^2(\delta))] \leq \delta. 
        \end{align}
        Since
        \begin{align*}
            \norm{\calM^{-2}}_2 &= \norm{\sum_{i=1}^{N-1}\sum_{k=0}^{i-1}\sigma_\mathsf{w}^2\calA^k (\calA^k)^\top}_2 \\
            &\leq (N-1)\bnorm{\Sigma^{K_{\tau_m}}_{\Theta_\star}}_2,
        \end{align*}
        we have that
        \begin{align*}
            \big\|\calW^\top\calW &- \calM^{-2}\big\|_2 = \norm{\calM^{-1}(\calM\calW^\top\calW\calM - I)\calM^{-1}}_2 \\
            &\leq \norm{\calM^{-2}}_2 \norm{\calM\calW^\top\calW\calM - I}_2 \\
            &\leq (N-1)\bnorm{\Sigma^{K_{\tau_m}}_{\Theta_\star}}_2 \norm{\calM\calW^\top\calW\calM - I}_2.
        \end{align*}
        This, combined with~\eqref{eq:LemmaB1}, further implies that
        \begin{align}
            &\norm{\calW^\top\calW - \calM^{-2}}_2 \nonumber\\
            &\quad \leq \max(\varepsilon(\delta_{m,N}),\varepsilon^2(\delta_{m,N}))(N-1)\bnorm{\Sigma^{K_{\tau_m}}_{\Theta_\star}}_2 \nonumber\\
            &\quad \leq (\tilde{\alpha}(\delta_{m,N})^2 + \tilde{\alpha}(\delta_{m,N}))\bar{\Sigma}\sqrt{N}, \label{eq:proof:conc_bound1}
        \end{align}
        with probability at least $1-\delta_{m,N}$.
    \end{enumerate}
    Finally, combining~\eqref{eq:proof:heartbound}, \eqref{eq:proof:conc_bound2}, and~\eqref{eq:proof:conc_bound1} concludes the proof.
\end{proof}

Armed with this result, we can state that the success event $\calE$ occurs with high probability.
\begin{lemma} \label{lem:high_prob}
    Let $\delta \in (0,1)$ and select $\delta_\mathrm{ID} = \delta_\mathrm{noise} = \delta/3$ and $\delta_{m,N}= 12\delta/(\pi^4 (m+1)^2 N^2)$. Then
    \begin{align*}
        \mathbb{P}[\calE] \geq 1 - \delta.
    \end{align*}
\end{lemma}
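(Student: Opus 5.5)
The plan is a union bound: show that each of the three constituent events fails with probability at most $\delta/3$. Then the complement of $\calE$ has probability at most $\delta$. Concretely,
\begin{align*}
    \mathbb{P}[\calE^c] \leq \mathbb{P}[(\calE^\mathrm{ID})^c] + \mathbb{P}[(\calE^\mathrm{noise})^c] + \sum_{m=0}^{M-1}\sum_{N=1}^\infty \mathbb{P}[(\calE^\mathrm{conc}_{m,N})^c].
\end{align*}

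\textbf{The identification event.} By Lemma~\ref{lem:id_success_event} with $\delta_\mathrm{ID} = \delta/3$, we get $\mathbb{P}[(\calE^\mathrm{ID})^c] \leq \delta/3$. This is immediate, because the event there is already uniform over all $t \geq 0$.

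\textbf{The noise event.} For each fixed $k$, $w_k \sim \calN(0,\sigma_\mathsf{w}^2 I)$. A standard Gaussian norm tail bound, for instance via a $1/2$-net over the sphere combined with the scalar Gaussian tail, gives
\begin{align*}
    \mathbb{P}\bigl[\snorm{w_k} > 4\sigma_\mathsf{w}\sqrt{\dx \log(1/\delta')}\bigr] \leq \delta'.
\end{align*}
The factor $4$ leaves ample slack for the net constant. I would apply this with $\delta'_k = 6\delta_\mathrm{noise}/(\pi^2(k+1)^2)$ and take a union bound over $k \geq 0$. Since $\sum_k 6/(\pi^2(k+1)^2) = 1$, we get, with probability at least $1-\delta_\mathrm{noise}$, that $\snorm{w_k} \leq 4\sigma_\mathsf{w}\sqrt{\dx\log(\pi^2(k+1)^2/(6\delta_\mathrm{noise}))}$ for all $k$.

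Then for every $t$ and every $k \leq t$, monotonicity of the threshold in $k$ gives
\begin{align*}
    \snorm{w_k} \leq 4\sigma_\mathsf{w}\sqrt{\dx\log(\pi^2(t+1)^2/(6\delta_\mathrm{noise}))} \leq 4\sigma_\mathsf{w}\sqrt{\dx\log(\pi^2(t+1)^3/(6\delta_\mathrm{noise}))}.
\end{align*}
So $\bar w_t$ satisfies the stated bound. (The stated exponent $3$ is even looser than needed.) With $\delta_\mathrm{noise} = \delta/3$, this gives $\mathbb{P}[(\calE^\mathrm{noise})^c] \leq \delta/3$.

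\textbf{The concentration event.} By Lemma~\ref{lem:conc_m}, each term satisfies $\mathbb{P}[(\calE^\mathrm{conc}_{m,N})^c] \leq \delta_{m,N}$. Summing over $m \geq 0$ and $N \geq 1$, and extending the finite sum over $m$ to infinity (this removes any dependence on the random number of epochs $M$), gives
\begin{align*}
    \sum_{m=0}^\infty\sum_{N=1}^\infty \frac{12\delta}{\pi^4(m+1)^2N^2} = \frac{12\delta}{\pi^4}\cdot\frac{\pi^2}{6}\cdot\frac{\pi^2}{6} = \frac{\delta}{3}.
\end{align*}

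\textbf{Main obstacle.} The subtle point is that $K_{\tau_m}$, $\tau_m$, and $x_{\tau_m}$ are random, so Lemma~\ref{lem:conc_m} must be applied conditionally. I would argue that, conditioned on the history up to $\tau_m$ (i.e., on $\calF_{\tau_m}$), the controller and initial state are fixed. Moreover, the future noises $w_{\tau_m}, w_{\tau_m+1}, \dots$ are independent of $\calF_{\tau_m}$, since $\tau_m$ is a stopping time and the noise is i.i.d. Hence the conditional probability of failure is at most $\delta_{m,N}$, and taking expectations yields the unconditional bound.

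For epochs $m$ that do not occur, the event can be treated as trivially satisfied. With that convention the sum over all $m$ is legitimate. Adding the three contributions gives $\mathbb{P}[\calE^c] \leq \delta$.
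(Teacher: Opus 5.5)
Your proposal is correct and follows the paper's proof: a union bound that assigns $\delta/3$ each to $\calE^\mathrm{ID}$ (directly from Lemma~\ref{lem:id_success_event}), to $\calE^\mathrm{noise}$ (Gaussian norm tails plus a $6/(\pi^2(\cdot)^2)$-weighted union bound), and to $\calE^\mathrm{conc}$ (Lemma~\ref{lem:conc_m} summed over all $m\geq 0$, $N\geq 1$, giving exactly $\delta/3$). The differences are minor:
\begin{itemize}
\item For the noise event, you union bound per-sample tails over $k$ and then use monotonicity, whereas the paper cites a maximal inequality and union bounds over $t$; this is why you only need exponent $2$ instead of $3$. As stated for all $\delta'\in(0,1)$, your per-sample bound with constant $4$ fails as $\delta'\to 1$ in dimension $\geq 3$. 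You only use $\delta'_k\leq 2\delta/\pi^2$, where it does hold.
\item You make explicit the stopping-time conditioning needed to apply Lemma~\ref{lem:conc_m} to the random $\tau_m$, $K_{\tau_m}$, $x_{\tau_m}$. The paper leaves this step implicit.
\end{itemize}
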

\begin{proof}
    We can prove this result by considering the failure probabilities of the individual events and using the union bound to obtain the overall success probability. Namely:
    \begin{enumerate}
        \item $\mathbb{P}[\overline{\calE^{\mathrm{noise}}}] \leq \delta^{\mathrm{noise}} = \delta/3$ holds by union bounding over the event for $t \geq 0$ from~\cite[Lemma 13]{lee2024nonasymptotic} with $\delta \gets 6 \delta^{\mathsf{noise}}/(\pi^2 (t+1)^2)$. 
        \item Due to Lemma~\ref{lem:id_success_event}, it holds that $\mathbb{P}[\overline{\calE^\mathrm{ID}}] \leq \delta^\mathrm{ID} = \delta/3$.
        \item Due to Lemma~\ref{lem:conc_m}, it holds that $\mathbb{P}[\overline{\calE^\mathrm{conc}_{m,N}}] \leq \delta_{m,N}, \forall m \in [M-1], \forall N \geq 1$. Union bounding this yields
        \begin{align*}
            \mathbb{P}&\left[\, \overline{\bigcap_{m=0}^{M-1}\bigcap_{N=1}^\infty \calE_{m,N}^\mathrm{conc}}\, \right] = \mathbb{P}\left[ \bigcup_{m=0}^{M-1} \bigcup_{N=1}^\infty \overline{\calE_{m,N}^\mathrm{conc}}\right] \\
            &\leq \sum_{m=0}^{M-1} \sum_{N=1}^\infty\mathbb{P}\left[ \overline{\calE_{m,N}^\mathrm{conc}}\right] \leq \sum_{m=0}^{M-1} \sum_{N=1}^\infty \delta_{m,N} \\
            &\leq \frac{12\delta}{\pi^4}\sum_{m=0}^{\infty}\sum_{N=1}^\infty \frac{1}{(m+1)^2N^2} = \frac{\delta}{3},
        \end{align*}
        i.e., $\mathbb{P}[\overline{\calE^\mathrm{conc}}] \leq \delta/3$. Consequently,
        \begin{align*}
            \mathbb{P}[\overline{\calE}] &= \mathbb{P} \left[ \overline{\calE^\mathrm{ID}} \cup \overline{\calE^\mathrm{noise}} \cup \overline{\calE^\mathrm{conc}} \,\right] \\
            &\leq \mathbb{P}[\overline{\calE^\mathrm{ID}}] +\mathbb{P}[\overline{\calE^\mathrm{noise}}] + \mathbb{P}[\overline{\calE^\mathrm{conc}}] \leq \delta. \qedhere
        \end{align*}
    \end{enumerate}
\end{proof}

\subsection{Proof of Theorem~\ref{thm:IR-LQR_regret}} \label{sec:app:main}
Having defined the success events in Appendix~\ref{sec:app:events}, we can proceed with the proof of the sublinear regret result of Theorem~\ref{thm:IR-LQR_regret}. Lemmas~\ref{lem:bounded_state} and~\ref{lem:sum_bound} will be useful to complete the proof:

\begin{lemma} \label{lem:bounded_state}
Consider the sequence of controllers $\{K_{\tau_m}\}_{m=0}^{M-1} \subseteq \mathcal{K}$,
where $K_{\tau_m}$ is applied in epoch $m$ for a total of
$T = \sum_{m=0}^{M-1}(\tau_{m+1}-\tau_m)=\tau_M$ time steps,
starting from $x_0=0$. Define $\tau_{\mathrm{stab}}:=\left\lceil\log(2C)/\log(1/\rho)\right\rceil.$
    
If $\tau_{m+1}-\tau_m \geq \tau_{\mathrm{stab}}$ for all
$m=0,\ldots,M-1$, then
\begin{align*}
    \bar{x}_{\tau_{m+1}}
    \leq
    c_x \bar{w}_{\tau_{m+1}},
    \qquad
    m=0,\ldots,M-1,
\end{align*}
where $c_x:=C(2C+1)/(1-\rho)$.
\end{lemma}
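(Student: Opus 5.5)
We argue by induction over epochs using the uniform exponential decay of Lemma~\ref{lem:every_theta_stable}. Since $\Theta_\star\in\calC_0$ by Assumption~\ref{asm:C0} and every $K_{\tau_m}\in\calK$, each closed-loop matrix $\calA_m:=A_\star+B_\star K_{\tau_m}$ satisfies $\|\calA_m^k\|\le\fnorm{\calA_m^k}\le C\rho^k$ for all $k\ge 0$. Within epoch $m$, for $\tau_m\le t\le\tau_{m+1}$ we have
\begin{align*}
x_t=\calA_m^{t-\tau_m}x_{\tau_m}+\sum_{k=\tau_m}^{t-1}\calA_m^{t-1-k}w_k .
\end{align*}
Using $\|w_k\|\le\bar w_{\tau_{m+1}}$ and summing the geometric series, we obtain the two bounds we need:
\begin{align*}
\|x_t\|\le C\rho^{t-\tau_m}\|x_{\tau_m}\|+\frac{C}{1-\rho}\bar w_{\tau_{m+1}} .
\end{align*}
Here $\bar w$ is nondecreasing in its index.

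The key steps are as follows. (i) \emph{Endpoint contraction.} At $t=\tau_{m+1}$ the hypothesis $\tau_{m+1}-\tau_m\ge\tau_{\mathrm{stab}}$ gives $\rho^{\tau_{m+1}-\tau_m}\le 1/(2C)$. Hence
\begin{align*}
\|x_{\tau_{m+1}}\|\le\tfrac12\|x_{\tau_m}\|+\frac{C}{1-\rho}\bar w_{\tau_{m+1}} .
\end{align*}
(ii) \emph{Induction on epoch starts.} Starting from $x_0=0$, we show $\|x_{\tau_m}\|\le \frac{2C}{1-\rho}\bar w_{\tau_m}$. The inductive step is
\begin{align*}
\|x_{\tau_{m+1}}\|\le \frac{C}{1-\rho}\bar w_{\tau_m}+\frac{C}{1-\rho}\bar w_{\tau_{m+1}}\le\frac{2C}{1-\rho}\bar w_{\tau_{m+1}},
\end{align*}
where the last inequality uses monotonicity of $\bar w$. (iii) \emph{Intra-epoch bound.} For any $t\in[\tau_m,\tau_{m+1}]$ we use $\rho^{t-\tau_m}\le1$ in the display above, which gives
\begin{align*}
\|x_t\|\le C\cdot\frac{2C}{1-\rho}\bar w_{\tau_m}+\frac{C}{1-\rho}\bar w_{\tau_{m+1}}\le\frac{C(2C+1)}{1-\rho}\bar w_{\tau_{m+1}}=c_x\bar w_{\tau_{m+1}} .
\end{align*}
(iv) \emph{Conclusion.} Taking the maximum over all $t\le\tau_{m+1}$, which covers all earlier epochs by monotonicity of $\bar w$, yields $\bar x_{\tau_{m+1}}\le c_x\bar w_{\tau_{m+1}}$.

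The main subtlety is choosing the right induction hypothesis. A naive bound that tracks $\bar x$ directly picks up a factor $C\ge1$ every epoch and blows up. The fix is to induct only on the epoch-start states, where the $\tau_{\mathrm{stab}}$ condition yields the factor $\tfrac12$. The overshoot $C$ is then paid only once, inside the current epoch, and this is exactly where the constant $2C+1$ in $c_x$ comes from. We should also check the off-by-one in $\tau_{\mathrm{stab}}$: $\rho^{\tau_{\mathrm{stab}}}\le 1/(2C)$ follows from $\tau_{\mathrm{stab}}\ge\log(2C)/\log(1/\rho)$. Finally, the noise index range used is at most $\tau_{m+1}-1\le\tau_{m+1}$, so it is covered by $\bar w_{\tau_{m+1}}$.
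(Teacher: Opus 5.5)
Your proposal is correct and matches the paper's proof essentially step for step. Both use the within-epoch bound $\|x_{\tau_m+s}\|\le C\rho^s\|x_{\tau_m}\|+\frac{C}{1-\rho}\bar{w}_{\tau_{m+1}}$ from Lemma~\ref{lem:every_theta_stable}, the factor-$\tfrac12$ contraction at epoch endpoints via $\tau_{\mathrm{stab}}$, induction on $\|x_{\tau_m}\|\le\frac{2C}{1-\rho}\bar{w}_{\tau_m}$, and substitution back into the within-epoch bound to obtain $c_x=C(2C+1)/(1-\rho)$.
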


\begin{proof}
Let $A_m := A_\star + B_\star K_{\tau_m}.$ By Lemma~\ref{lem:every_theta_stable}, $\|A_m^k\|_2 \leq C\rho^k$ for all $k\geq 0$. Hence, for
$s\in\{0,\ldots,\tau_{m+1}-\tau_m\}$,
\begin{align*}
    \|x_{\tau_m+s}\|
    &\leq
    \|A_m^s\|_2\|x_{\tau_m}\|
    +
    \sum_{j=0}^{s-1}
    \|A_m^{s-1-j}\|_2\|w_{\tau_m+j}\| \\
    &\leq
    C\rho^s\|x_{\tau_m}\|
    +
    \frac{C}{1-\rho}\bar{w}_{\tau_{m+1}}.
\end{align*}
In particular, since $C\rho^{\tau_{\mathrm{stab}}}\leq \frac{1}{2},$
it follows that
\begin{align*}
    \|x_{\tau_{m+1}}\|
    \leq
    \frac{1}{2}\|x_{\tau_m}\|
    +
    \frac{C}{1-\rho}\bar{w}_{\tau_{m+1}}.
\end{align*}
Since $x_0=0$ and $\bar{w}_t$ is nondecreasing, induction over the
epochs yields
\begin{align*}
    \|x_{\tau_m}\|
    \leq
    \frac{2C}{1-\rho}\bar{w}_{\tau_m},
    \qquad
    m=0,\ldots,M.
\end{align*}
Substituting this bound above and using
$\bar{w}_{\tau_m}\leq\bar{w}_{\tau_{m+1}}$ gives
\begin{align*}
    \|x_{\tau_m+s}\|
    \leq
    \left(
        \frac{2C^2}{1-\rho}
        +
        \frac{C}{1-\rho}
    \right)
    \bar{w}_{\tau_{m+1}} =
    \frac{C(2C+1)}{1-\rho}
    \bar{w}_{\tau_{m+1}}
\end{align*}
for all $s\in\{0,\ldots,\tau_{m+1}-\tau_m\}$. Taking the maximum
over all time steps up to $\tau_{m+1}$ concludes the proof.
\end{proof}

Under event $\mathcal{E}^{\mathrm{noise}}$, define
\begin{align*}
    \hat{w}_t
    &:=
    4\sigma_\mathsf{w}
    \sqrt{
        \dx\log\left(
            \frac{\pi^2(t+1)^3}{6\delta^{\mathrm{noise}}}
        \right)
    },
\end{align*}
such that $\bar{w}_t \leq \hat{w}_t$ for all $t\geq 0$.
Furthermore, for $m\geq 0$ and $t\geq \tau_m$, define
\begin{align*}
    \underline{\delta}_{m,t}
    &:=
    \frac{12\delta}
    {\pi^4(m+1)^2(t+1)^2}.
\end{align*}
For $N=t-\tau_m\geq 1$, it holds that
$\underline{\delta}_{m,t}\leq\delta_{m,N}$. Using
Lemma~\ref{lem:bounded_state} and the choice of $\delta_{m,N}$ in Lemma~\ref{lem:high_prob},
we can therefore define the following bound on $\alpha$ which does not depend on the realization of the noise:
\begin{align}
    \bar{\alpha}(m,t)
    &:=
    C^2\left(
        \frac{
            c_x^2 \hat{w}_t^2+\bar{\Sigma}
        }{1-\rho^2}
        +
        \frac{
            2c_x \hat{w}_t^2
        }{(1-\rho)^2}
    \right)
    \nonumber\\
    &\quad+
    \left(
        \tilde{\alpha}(\underline{\delta}_{m,t})^2
        +
        \tilde{\alpha}(\underline{\delta}_{m,t})
    \right)\bar{\Sigma},
\end{align}
which can be used to compute the epoch lower bound
\begin{align}\label{eq:tau_underline_def}
    \underline \tau(m,t)
    &:=
    \max\left\{
        \tau_{\mathrm{stab}},
        \frac{4\bar{\alpha}(m,t)^2}{\sigma_\mathsf{w}^4}
    \right\}.
\end{align}

An immediate consequence of Lemma~\ref{lem:bounded_state} is the following result.
\begin{corollary} \label{cor:bounded_state}
    Under event $\calE^\mathrm{noise}$ and the same conditions as in Lemma~\ref{lem:bounded_state}, it holds that 
    \begin{enumerate}
        \item $\bar \alpha(m, \tau_{m+1}) \geq \alpha(\delta_{m,\tau_{m+1}-\tau_m}, \tau_m,\tau_{m+1}-\tau_m)$
        \item $\bar{x}_T = \mathcal{O}(\sqrt{\log T})$,
        \item $\bar{\alpha}(M,T) = \calO(\mathsf{poly}\log T)$,
        \item $\norm{V_T}_2 = \tilde{\calO}(T)$,
        \item $\beta_T = \calO(\sqrt{\log T})$,
        \item $g_T = \tilde{\calO}(\sqrt{T})$.
    \end{enumerate}
\end{corollary}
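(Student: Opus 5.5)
Each of the six items is derived in turn under $\calE^\mathrm{noise}$, with each item feeding the next. Throughout, we use the conclusion of Lemma~\ref{lem:bounded_state}, namely $\bar x_{\tau_{m+1}}\le c_x\bar w_{\tau_{m+1}}$. Its hypothesis $\tau_{m+1}-\tau_m\ge\tau_{\mathrm{stab}}$ holds because $\underline\tau(m,t)\ge\tau_{\mathrm{stab}}$ in~\eqref{eq:tau_underline_def}. On $\calE^\mathrm{noise}$ we also have $\bar w_t\le\hat w_t=\calO(\sqrt{\log t})$.

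\textbf{Items 1 and 2.} For item~1, compare $\alpha(\delta_{m,N},\tau_m,N)$ with $N=\tau_{m+1}-\tau_m$ against $\bar\alpha(m,\tau_{m+1})$ term by term.
\begin{itemize}
\item Since $\tau_m\le\tau_{m+1}$, we have $\bar x_{\tau_m}\le\bar x_{\tau_{m+1}}\le c_x\hat w_{\tau_{m+1}}$, so $\bar x_{\tau_m}^2\le c_x^2\hat w_{\tau_{m+1}}^2$.
\item Also $\bar w_{\tau_m+N}\bar x_{\tau_m}\le \hat w_{\tau_{m+1}}\cdot c_x\hat w_{\tau_{m+1}}$.
\item $\tilde\alpha(\delta)$ is decreasing in $\delta$, and $\underline\delta_{m,\tau_{m+1}}\le\delta_{m,N}$ (because $N\le\tau_{m+1}+1$). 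Hence $\tilde\alpha(\delta_{m,N})\le\tilde\alpha(\underline\delta_{m,\tau_{m+1}})$.
\end{itemize}
Summing these three comparisons gives item~1. Item~2 follows by applying Lemma~\ref{lem:bounded_state} at the final epoch, $\bar x_T\le c_x\bar w_T\le c_x\hat w_T=\calO(\sqrt{\log T})$.

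\textbf{Items 3 and 4.} For item~3, inspect $\bar\alpha(M,T)$. We have $\hat w_T^2=\calO(\log T)$. Since $\underline\delta_{M,T}$ is polynomial in $1/T$ and $M\le T$, we get $\log(18/\underline\delta_{M,T})=\calO(\log T)$, so $\tilde\alpha=\calO(\sqrt{\log T})$ and $\tilde\alpha^2=\calO(\log T)$. Together these give $\bar\alpha(M,T)=\calO(\mathsf{poly}\log T)$. For item~4, bound
\[
\|V_T\|_2\le\lambda+\sum_{k<T}\|x_k\|^2(1+\|K_k\|^2)\le\lambda+T\,\bar x_T^2(1+\bar K^2),
\]
which uses $K_k\in\calK$ (Lemma~\ref{lem:K_in_calK}). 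By item~2 this is $\calO(T\log T)=\tilde\calO(T)$.

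\textbf{Items 5 and 6.} For item~5, use $\log\det V_T\le(\dx+\du)\log\|V_T\|_2=\calO(\log T)$ in~\eqref{eq:beta_def}, which yields $\beta_T=\calO(\sqrt{\log T})$. Item~6 then follows directly from~\eqref{eq:gt}:
\[
g_T=L_J\big(\beta_T^2+2\bar\Theta\beta_T\|V_T\|_2^{1/2}\big)=\calO(\log T)+\calO\big(\sqrt{\log T}\cdot\sqrt{T\log T}\big)=\tilde\calO(\sqrt T).
\]

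\textbf{Main obstacle.} The delicate step is item~1. It requires checking that the noise-independent quantity $\bar\alpha$, which is evaluated at the endpoint $\tau_{m+1}$, dominates $\alpha$, which is evaluated at the start $\tau_m$ but with the noise horizon $\tau_m+N$. Two facts must be handled carefully here: $\delta_{m,N}$ must be compared with $\underline\delta_{m,\tau_{m+1}}$ using $N\le\tau_{m+1}+1$, and both $\hat w$ and $\tilde\alpha$ must be monotone in the right direction.
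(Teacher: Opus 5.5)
Your proof is correct and follows essentially the same route as the paper, which simply asserts the corollary as an immediate consequence of Lemma~\ref{lem:bounded_state} and gives no further argument. You supply the omitted details in the expected way: the term-by-term comparison via $\bar x_{\tau_m}\le\bar x_{\tau_{m+1}}\le c_x\bar w_{\tau_{m+1}}\le c_x\hat w_{\tau_{m+1}}$, together with $\underline\delta_{m,\tau_{m+1}}\le\delta_{m,N}$ and $\tilde\alpha$ decreasing; then substituting $\bar x_T=\mathcal{O}(\sqrt{\log T})$ into the definitions of $V_T$, $\beta_T$ and $g_T$, which matches the bound $\|V_T\|_2\le(1+\bar K^2)\bar x_T^2T+\lambda$ the paper itself uses later in the regret proof.
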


\begin{lemma} \label{lem:sum_bound}
    Under event $\calE$, it holds that
    \begin{align*}
        \sum_{m=0}^{M-1} \sum_{t=\tau_m}^{\tau_{m+1}-1}  a_t\trace\left( V_{\tau_m}^{-1} \tilde{\Sigma}^{K_{\tau_m}}_{\Theta_\star}\right) = \tilde{\calO}(\bar{a}_T),
    \end{align*}
    for a nonnegative scalar sequence $\{a_t\}_{t=0}^{T}$ and $\bar{a}_T:=\max_{0\leq t\leq T}a_t$.
\end{lemma}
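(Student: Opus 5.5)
We need to bound $\sum_m \sum_{t\in\text{epoch } m} a_t \trace(V_{\tau_m}^{-1}\tilde\Sigma^{K_{\tau_m}}_{\Theta_\star})$. The plan is to pull out $\bar a_T$ and replace the population covariance by the empirical one via $\calE^\mathrm{conc}$, then use the elliptical potential (log-det) argument together with the determinant-doubling rule.

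\emph{Step 1 (reduce to empirical covariances).} Since $a_t\le \bar a_T$ and the trace is nonnegative, the sum is at most $\bar a_T\sum_m N_m \trace(V_{\tau_m}^{-1}\tilde\Sigma^{K_{\tau_m}}_{\Theta_\star})$ with $N_m:=\tau_{m+1}-\tau_m$. Write $\tilde\Sigma^{K}=\bmat{I\\K}\Sigma^K_{\Theta_\star}\bmat{I\\K}^\top$. Within epoch $m$ the controller is fixed, so $x_t^{(m)}=x_t$ and $z_t:=\bmat{x_t\\u_t}=\bmat{I\\K_{\tau_m}}x_t$. On $\calE^\mathrm{conc}_{m,N_m}$, using Corollary~\ref{cor:bounded_state}(1),
\begin{align*}
N_m\Sigma^{K_{\tau_m}}_{\Theta_\star}\preceq \sum_{t=\tau_m}^{\tau_{m+1}-1}x_tx_t^\top+\bar\alpha(m,\tau_{m+1})\sqrt{N_m}\,I .
\end{align*}
Because of the minimum epoch length $\underline\tau$ in \eqref{eq:tau_underline_def}, $\bar\alpha\sqrt{N_m}\le \sigma_\mathsf{w}^2N_m/2\le \lambda_{\min}(N_m\Sigma^{K_{\tau_m}}_{\Theta_\star})/2$, since $\Sigma^K\succeq\sigma_\mathsf{w}^2I$. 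Hence $N_m\Sigma^{K_{\tau_m}}_{\Theta_\star}\preceq 2\sum_{t}x_tx_t^\top$, and conjugating by $\bmat{I\\K}$ gives $N_m\tilde\Sigma^{K_{\tau_m}}\preceq 2\sum_{t=\tau_m}^{\tau_{m+1}-1}z_tz_t^\top$. The sum is therefore at most $2\bar a_T\sum_m\sum_{t\in\text{epoch }m}z_t^\top V_{\tau_m}^{-1}z_t$.

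\emph{Step 2 (switch from $V_{\tau_m}$ to $V_t$).} This is the main obstacle: within an epoch we may only use $V_{\tau_m}^{-1}$. Where the epoch ended because of the determinant condition, we have $\det V_t\le 2\det V_{\tau_m}$ for $t<\tau_{m+1}$ (except possibly the last step). The standard lemma (Abbasi-Yadkori) then gives $V_{\tau_m}^{-1}\preceq \frac{\det V_t}{\det V_{\tau_m}}V_t^{-1}\preceq 2V_t^{-1}$. Epochs may, however, be prolonged by the minimum-length requirement after doubling has already occurred. These extra steps need separate handling. I would bound their contribution by $\trace(V_{\tau_m}^{-1}\tilde\Sigma)\le\trace(\tilde\Sigma)/\lambda$ times the number of such steps. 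The number of such steps is at most $\sum_m\underline\tau(m,T)$, and the number of epochs satisfies $M\le \log_2(\det V_T/\det\lambda I)=\calO(\log T)$ by Corollary~\ref{cor:bounded_state}(4). Since $\underline\tau=\calO(\mathsf{poly}\log T)$ by Corollary~\ref{cor:bounded_state}(3), this contribution is $\tilde\calO(1)$. Alternatively, on these steps one can apply the cruder ratio $\det V_t/\det V_{\tau_m}$, which stays polylogarithmic because $z_t$ is bounded by Corollary~\ref{cor:bounded_state}(2).

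\emph{Step 3 (elliptical potential).} With $V_{\tau_m}^{-1}\preceq 2V_t^{-1}$, we need $\sum_{t<T}z_t^\top V_t^{-1}z_t$. Because $\|z_t\|^2=\calO(\log T)$ on $\calE^\mathrm{noise}$, and $\lambda$ is large, $z_t^\top V_t^{-1}z_t\le \|z_t\|^2/\lambda$. Using $\min(1,y)\le 2\log(1+y)$ after rescaling by $\max(1,\bar z_T^2/\lambda)$, the elliptical potential lemma gives $\sum_t z_t^\top V_t^{-1}z_t\le \calO(\log T)\cdot 2\log(\det V_T/\det\lambda I)=\tilde\calO(1)$. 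Combining the three steps yields the claimed $\tilde\calO(\bar a_T)$.
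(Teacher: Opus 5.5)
Your proposal is correct and follows essentially the same route as the paper. You pull out $\bar a_T$, then use $\calE^\mathrm{conc}$ with the minimum epoch length $\underline\tau$ to dominate $N_m\tilde\Sigma^{K_{\tau_m}}_{\Theta_\star}$ by twice the empirical covariance; your PSD argument is a slightly cleaner version of the paper's $E_m$ splitting and gives the same factor. Next you pass from $V_{\tau_m}^{-1}$ to $V_t^{-1}$ via the determinant-ratio bound plus a crude polylogarithmic correction for steps where $\underline\tau$ prolongs the epoch, which the paper folds into the single factor $\max\{2,\underline\tau\bar x_T^2/\lambda\}$. You finish with the elliptical-potential/information-gain bound.

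One small slip: after Step 1, the summand on those prolonged steps is $z_t^\top V_{\tau_m}^{-1}z_t\le \|z_t\|^2/\lambda=\calO(\log T)/\lambda$, not $\trace(V_{\tau_m}^{-1}\tilde\Sigma^{K_{\tau_m}}_{\Theta_\star})$. The resulting polylogarithmic bound is unchanged.
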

\begin{proof}
    Let $E_m = \sum_{t=\tau_m}^{\tau_{m+1}-1}(\tilde{\Sigma}^{K_{\tau_m}}_{\Theta_\star} - x_tx_t^\top)$. Under event $\calE^\mathrm{conc}_{m,\tau_{m+1}-\tau_m} \cap \calE^\mathrm{noise}$,
    Corollary~\ref{cor:bounded_state} implies that $\norm{E_m}_2 \leq \bar \alpha(m, \tau_{m+1}) \sqrt{\tau_{m+1}-\tau_m}$. Then by the fact that $\tau_{m+1} - \tau_m \geq \underline \tau(m, \tau_{m+1}) \geq 4 \frac{\bar \alpha(m, \tau_{m+1})^2}{\sigma_\mathsf{w}^4}$, it holds that
    \begin{align*}
        \sum_{t=\tau_m}^{\tau_{m+1}-1} x_t x_t^\top \succeq \frac{\sigma_\mathsf{w}^2(\tau_{m+1} - \tau_m)}{2}.
    \end{align*}
    Therefore
    \begin{align*}
        &\trace\left(V_{\tau_m}^{-1}\bmat{I \\ K_{\tau_m}} E_m \bmat{I \\ K_{\tau_m}}^\top\right) \\
        &\leq \bar \alpha(m, \tau_{m+1})\sqrt{\tau_{m+1}-\tau_m} \trace\left(V_{\tau_m}^{-1}\bmat{I \\ K_{\tau_m}} \bmat{I \\ K_{\tau_m}}^\top\right) \\
        &\leq\frac{2 \bar \alpha(m, \tau_{m+1})}{\sigma_\mathsf{w}^2\sqrt{\tau_{m+1}-\tau_m}}\trace\left(V_{\tau_m}^{-1}\bmat{I \\ K_{\tau_m}} \sum_{t=\tau_m}^{\tau_{m+1}-1}x_tx_t^\top\bmat{I \\ K_{\tau_m}}^\top\right).
    \end{align*}
  Then it follows that
    \begin{align*}
        &(\tau_{m+1}-\tau_m)\trace(V_{\tau_m}^{-1} \tilde{\Sigma}^{K_{\tau_m}}_{\Theta_\star}) \\
        &\leq \trace\left(V_{\tau_m}^{-1}\bmat{I \\ K_{\tau_m}} E_m \bmat{I \\ K_{\tau_m}}^\top\right) \\
        &+ \trace\left(V_{\tau_m}^{-1}\bmat{I \\ K_{\tau_m}} \sum_{t=\tau_m}^{\tau_{m+1}-1}x_tx_t^\top\bmat{I \\ K_{\tau_m}}^\top\right) \\
        &\leq \left(1+\frac{2\bar \alpha(m, \tau_{m+1})}{\sigma_\mathsf{w}^2\sqrt{\underline{\tau}(m, \tau_{m+1})}}\right) \\
        &\times \trace\left(V_{\tau_m}^{-1}\bmat{I \\ K_{\tau_m}} \sum_{t=\tau_m}^{\tau_{m+1}-1}x_tx_t^\top\bmat{I \\ K_{\tau_m}}^\top\right).
    \end{align*}
    Consequently,
    \begin{align*}
        &\sum_{m=0}^{M-1} \sum_{t=\tau_m}^{\tau_{m+1}-1}  a_t\trace\left( V_{\tau_m}^{-1} \tilde{\Sigma}^{K_{\tau_m}}_{\Theta_\star}\right) \\
        &\leq \bar{a}_T \sum_{m=0}^{M-1} (\tau_{m+1}-\tau_m)\trace(V_{\tau_m}^{-1} \tilde{\Sigma}^{K_{\tau_m}}_{\Theta_\star})\\
        &\leq\left(1+\frac{2\bar \alpha(M, T)}{\sigma_\mathsf{w}^2\sqrt{\underline{\tau}(M, T)}}\right)\bar{a}_T \times \\
        &\quad\sum_{m=0}^{M-1}\sum_{t=\tau_m}^{\tau_{m+1}-1}\trace\left(V_{\tau_m}^{-1}\bmat{I \\ K_{\tau_m}}x_tx_t^\top\bmat{I \\ K_{\tau_m}}^\top\right),
    \end{align*}
    where the last inequality holds by the monotonicity of the sequence $\underline{\tau}(m, t)$ and $\bar \alpha(m, t)$ as $m$ and $t$ increase. 
    Furthermore,
    \begin{align*}
        &\sum_{m=0}^{M-1}\sum_{t=\tau_m}^{\tau_{m+1}-1}\trace\left(V_{\tau_m}^{-1}\bmat{I \\ K_{\tau_m}}x_tx_t^\top\bmat{I \\ K_{\tau_m}}^\top\right) \\
        &= \sum_{m=0}^{M-1}\sum_{t=\tau_m}^{\tau_{m+1}-1}\trace\left(V_{\tau_m}^{-1}\bmat{x_t \\ u_t} \bmat{x_t \\ u_t}^\top\right) \\
        &= \sum_{m=0}^{M-1}\sum_{t=\tau_m}^{\tau_{m+1}-1} \norm{V_{\tau_m}^{-1/2} \bmat{x_t \\ u_t}}_2^2 \\
        &= \sum_{m=0}^{M-1}\sum_{t=\tau_m}^{\tau_{m+1}-1} \norm{V_{\tau_m}^{-1/2}V_t^{1/2}V_t^{-1/2} \bmat{x_t \\ u_t}}_2^2 \\
        &\leq \sum_{m=0}^{M-1}\sum_{t=\tau_m}^{\tau_{m+1}-1} \norm{V_{\tau_m}^{-1/2}V_t^{1/2}}_2^2 \norm{V_t^{-1/2} \bmat{x_t \\ u_t}}_2^2 \\
        &\leq \sum_{m=0}^{M-1}\sum_{t=\tau_m}^{\tau_{m+1}-1} \max\curly{2, \underline{\tau}(m, \tau_{m+1}))
        \bar x_T^2/\lambda}\norm{V_t^{-1/2} \bmat{x_t \\ u_t}}_2^2 \\
        &= \calO(\mathsf{poly}\log T).
    \end{align*}
    The last inequality follows from $\det(V_{t-1}) \leq 2 \det(V_{\tau_m})$ and $V_{t-1} \succeq V_{\tau_m}$ along with a crude bound on the change in the covariance over an epoch in the event that the lower bound $\tau_{m+1} - \tau_m \geq \underline{\tau}(m, \tau_{m+1})$ is active. 
    The last equality is a standard result based on the maximum information gain associated with linear kernels~\cite{srinivas2009gaussian} along with bounds on $\bar x_T$ and $\underline \tau(m, \tau_{m+1})$. Combining this result with Corollary~\ref{cor:bounded_state} concludes the proof.
\end{proof}

The last part of the proof of Theorem~\ref{thm:IR-LQR_regret} is to show that as long as the success event $\calE$ holds, the regret is of order $\tilde{\calO}(\sqrt{T})$. Defining
\begin{align*}
    \Delta_m = \sum_{t = \tau_m}^{\tau_{m+1}-1}\ell(x_t,u_t) - (\tau_{m+1} - \tau_m) J(K_{\tau_m}, \Theta_\star),
\end{align*}
allows us to rewrite the cumulative regret as
\begin{align*}
    \calR_T &= \sum_{m=0}^{M-1} \left( \sum_{t = \tau_m}^{\tau_{m+1}-1} \ell(x_t,u_t) - (\tau_{m+1} - \tau_m) J(K_\star, \Theta_\star) \right) \\
    &= \sum_{m=0}^{M-1} \left( \Delta_m + (\tau_{m+1} - \tau_m)(J(K_{\tau_m}, \Theta_\star) - J(K_\star, \Theta_\star) )\right),
\end{align*}
where
\begin{align*}
    J&(K_{\tau_m}, \Theta_\star) - J(K_\star, \Theta_\star) \\
    &\stackrel{\eqref{eq:Jbar_optimistic}}{\leq} J(K_{\tau_m}, \Theta_{\star}) - \bar{J}(K_\star, \hat{\Theta}_{\tau_m},\calV_{\tau_m}) \\
    &\leq J(K_{\tau_m}, \Theta_\star) - \bar{J}(K_{\tau_m}, \hat{\Theta}_{\tau_m}\calV_{\tau_m}) \\
    &\leq J(K_{\tau_m}, \Theta_\star) - J(K_{\tau_m}, \hat{\Theta}_{\tau_m}) + \calB(K_{\tau_m},\hat{\Theta}_{\tau_m},\calV_{\tau_m}) \\
    &\stackrel{\eqref{eq:proof:optimism_theta_hat}}{\leq} 2 \calB(K_{\tau_m},\hat{\Theta}_{\tau_m},\calV_{\tau_m}) \\
    &\stackrel{\eqref{eq:calB}}{\leq} 2\trace(\calV_{\tau_m}\tilde{\Sigma}_{\hat{\Theta}_{\tau_m}}^{K_{\tau_m}}).
\end{align*}
Therefore, the regret can be decomposed as
\begin{align*}
    \calR_T \leq \calR_T^{(1)} + 2\calR_T^{(2)},
\end{align*}
with
\begin{align*}
    \calR^{(1)}_T &= \sum_{m=0}^{M-1} \Delta_m, \\
    \calR^{(2)}_T &= \sum_{m=0}^{M-1}  (\tau_{m+1} - \tau_m)\trace(\calV_{\tau_m}\tilde{\Sigma}^{K_{\tau_m}}_{\hat{\Theta}_{\tau_m}}).
\end{align*}

In the following, we bound the two regret components.
\begin{enumerate}
    \item Bounding $\calR^{(1)}_T$.
    \begin{align*}
        &\Delta_m = - (\tau_{m+1}-\tau_m)J(K_{\tau_m},\Theta_\star) + \sum_{t=\tau_m}^{\tau_{m+1}-1} \ell(x_t,u_t) \\
        &= - (\tau_{m+1}-\tau_m)J(K_{\tau_m},\Theta_\star)\\
        &+ \sum_{t=\tau_m}^{\tau_{m+1}-1} \trace\left(\bmat{ Q & \\ & R } \bmat{I \\ K_{\tau_m}} x_t x_t^\top \bmat{ I \\ K_{\tau_m}}^\top\right) \\
        &= - (\tau_{m+1}-\tau_m)J(K_{\tau_m},\Theta_\star) \\
        &\hspace{-1em}+ \trace\left((Q+K_{\tau_m}^\top R K_{\tau_m}) \sum_{t=\tau_m}^{\tau_{m+1}-1}(x_t x_t^\top-\Sigma^{K_{\tau_m}}_{\Theta_\star} + \Sigma^{K_{\tau_m}}_{\Theta_\star})\right) \\
        &\leq - (\tau_{m+1}-\tau_m)J(K_{\tau_m},\Theta_\star) \\
        &+ \sum_{t=\tau_m}^{\tau_{m+1}-1}\trace(\diag(Q,R)\tilde{\Sigma}^{K_{\tau_m}}_{\Theta_\star}) \\
        &+ \fnorm{(Q+K_{\tau_m}^\top R K_{\tau_m})} \fnorm{\sum_{t=\tau_m}^{\tau_{m+1}-1}(x_t x_t^\top - \Sigma_{\Theta_\star}^{K_{\tau_m}})} \\
        &\leq (\fnorm{Q}+\fnorm{R}\bar{K}^2) \alpha(\delta_m,\tau_{m+1})\sqrt{\dx}\sqrt{\tau_{m+1}-\tau_m},
    \end{align*}
    where the last inequality is due to the covariance concentration event $\calE^\mathrm{conc}_{m,N}$. Combined with Corollary~\ref{cor:bounded_state}, this implies that $\Delta_m =  \tilde{\calO}(\sqrt{\tau_{m+1}-\tau_m})$, furthermore, $\sum_{m=0}^{M-1}\sqrt{\tau_{m+1}-\tau_m} \leq \sqrt{M} \sqrt{T}$. Due to the information gain criterion in line~\ref{algstep:info_gain_criterion} of Algorithm~\ref{alg:IR-LQR}, $\det(V_T) = \det(V_{\tau_M}) \geq 2^M \det(V_0) = 2^M \det(\lambda I)$, therefore
    \begin{align*}
        (2^{1/(\dx+\du)})^M \lambda &\leq \det(V_T)^{1/(\dx+\du)} \\ &\leq \norm{V_T}_2 \leq (1+\bar{K}^2)\bar{x}_T^2 T + \lambda,
    \end{align*}
    resulting in $M=\calO(\log T)$. Consequently, $\calR^{(1)}_T = \tilde{\calO}(\sqrt{T})$.
    \item Bounding $\calR^{(2)}_T$. Note that
    \begin{align*}
        \trace(V_{\tau_m}^{-1}\tilde{\Sigma}^{K_{\tau_m}}_{\hat{\Theta}_{\tau_m}}) = \trace(V_{\tau_m}^{-1}(\tilde{\Sigma}^{K_{\tau_m}}_{\hat{\Theta}_{\tau_m}}-\tilde{\Sigma}^{K_{\tau_m}}_{\Theta_\star})) + \trace(V_{\tau_m}^{-1}\tilde{\Sigma}^{K_{\tau_m}}_{\Theta_\star}),
    \end{align*}
    where
    \begin{align*}
        &\trace(V_{\tau_m}^{-1}(\tilde{\Sigma}^{K_{\tau_m}}_{\hat{\Theta}_{\tau_m}}-\tilde{\Sigma}^{K_{\tau_m}}_{\Theta_\star})) \\
        &\leq \bfnorm{\Sigma^{K_{\tau_m}}_{\hat{\Theta}_{\tau_m}}-\Sigma^{K_{\tau_m}}_{\Theta_\star}}\fnorm{\bmat{I \\ K_{\tau_m}}^\top V_{\tau_m}^{-1} \bmat{I \\ K_{\tau_m}}}.
    \end{align*}
    Using similar arguments as in the proof of Lemma~\ref{lem:exploration_bounds_cost_diff}, we get that
    \begin{align*}
        &\bfnorm{\Sigma^K_{\Theta_\star} -  \Sigma^K_{\hat{\Theta}_t}} \\ &\leq \frac{C^2}{1-\rho^2}\Big( 2 \bfnorm{\hat{\Theta}_t}  \bfnorm{\tilde{\Sigma}^K_{\Theta_\star} (\Theta^\star-\hat{\Theta}_t)^\top} \\
        &+ \bfnorm{ (\Theta^\star-\hat{\Theta}_t) \tilde{\Sigma}^K_{\Theta_\star} (\Theta^\star-\hat{\Theta}_t)^\top}\Big) \\
        &\leq \frac{1}{\lambda}\left(\bar{\Theta} + \frac{1}{\lambda}\right)\frac{4C^2}{1-\rho^2} \trace(\Sigma^{K_{\tau_m}}_{\Theta_\star}) \\
        &= c(\lambda)\trace(\tilde{\Sigma}^{K_{\tau_m}}_{\Theta_\star})/(\fnorm{Q}+\bar{K}^2\fnorm{R}).
    \end{align*}
    Furthermore, 
    \begin{align*}
        \fnorm{\bmat{I \\ K_{\tau_m}}^\top V_{\tau_m}^{-1} \bmat{I \\ K_{\tau_m}}} &\leq \trace\left(\bmat{I \\ K_{\tau_m}}^\top V_{\tau_m}^{-1} \bmat{I \\ K_{\tau_m}} \right) \\
        &\leq \frac{1}{\sigma_\mathsf{w}^2} \trace(V_{\tau_m}^{-1}\tilde{\Sigma}^{K_{\tau_m}}_{\Theta_\star}).
    \end{align*}
    Thus,
    \begin{align*}
        &\trace(\calV_{\tau_m}\tilde{\Sigma}^{K_{\tau_m}}_{\hat{\Theta}_{\tau_m}}) \leq g_{\tau_m}\trace(V_{\tau_m}^{-1}\tilde{\Sigma}^{K_{\tau_m}}_{\hat{\Theta}_{\tau_m}}) \\
        &\leq
        g_{\tau_m}\left(1 + \frac{c(\lambda)\trace(\tilde{\Sigma}^{K_{\tau_m}}_{\Theta_\star})}{\sigma_\mathsf{w}^2(\fnorm{Q}+\bar{K}^2\fnorm{R})} \right)\trace(V_{\tau_m}^{-1}\tilde{\Sigma}^{K_{\tau_m}}_{\Theta_\star}).
    \end{align*}
    Lemma~\ref{lem:sum_bound} together with Corollary~\ref{cor:bounded_state} results in $\calR^{(2)}_T = \tilde{\calO}(g_T) = \tilde{\calO}(\sqrt{T})$.
\end{enumerate}

In summary, we have shown that event $\calE$ occurs with probability at least $1-\delta$, and the cumulative regret is $\calR_T \leq \calR^{(1)}_T + 2\calR^{(2)}_T = \tilde{\calO}(\sqrt{T})$ under event $\calE$. \hfill $\blacksquare$

\end{document}